\newcommand{\N}{{\mathbb N}}
\newcommand{\Z}{{\mathbb Z}}
\newtheorem{Theorem}{Theorem}
\newtheorem{Definition}{Definition}
\newtheorem{Example}{Example}
\newtheorem{Conjecture}{Conjecture}
\newtheorem{Remark}{Remark}
\date{March 28, 2018}
\begin{document}

\title{Does universal controllability of
physical systems prohibit thermodynamic cycles?}

\author{
Dominik Janzing \\
{\small Max Planck Institute for Intelligent Systems}\\
{\small Max-Planck-Ring 4}\\
{\small 72076 T\"ubingen,  Germany}\\
{\small Email: \texttt{dominik.janzing@tuebingen.mpg.de}}\\
\\
Pawel Wocjan \\
{ \small Department of Computer Science, University of Central Florida}\\
{\small 4328 Scorpius Street}\\
{\small Orlando, FL 32816, USA}\\
{\small Email: \texttt{wocjan@cs.ucf.edu}}
}

\maketitle





\abstract{
Here we study the thermodynamic cost of
computation and control 
using 'physically universal' cellular automata or Hamiltonians.
The latter were previously defined as systems  that admit the implementation of any desired transformation  on
a finite target region by first initializing the state of the surrounding and 
then letting the system evolve according to its autonomous dynamics.
This way, one obtains a model of control where each 
region can play both roles the controller or the system to be controlled. In physically universal systems every degree of freedom is indirectly accessible by
operating on the remaining degrees of freedom. 

In a nutshell, the thermodynamic cost of an operation is then given by the size of the 
region around the target region that needs to be initialized. In the meantime, physically universal CAs have been constructed by Schaeffer (in two dimensions) and 
Salo \& T\"orm\"a (in one dimension). Here we show
that in Schaeffer's CA the cost for implementing
$n$ operations grows linearly in $n$, while operating in a thermodynamic cycle requires
sublinear growth to ensure zero cost {\it per operation} in the limit $n\to\infty$.
Although this particular result need not hold for general physically universal CAs, this strong notion of universality does imply a certain kind of instability of information, which could result in lower bounds on the cost of protecting information from its noisy environment.

The technical results of the paper are sparse and quite simple.
The contribution of the paper is mainly conceptual and consists in illustrating the type of thermodynamic questions raised by models of control that rely on the concept of physical universality. 
}

\section{Why thermodynamics of computation and control requires new models}

\subsection{The debate on thermodynamics of computation since the 1960s}

The question of whether there are fundamental lower bounds on the energy consumption of computing devices has attracted the attention of researchers since the
1960s. Landauer \cite{Landauer:61} realized that logically irreversible operations like erasure of memory space 
necessarily require to transfer the energy $\ln 2 k T$
per bit to the environment (with $k$ denoting Boltzmann's constant and
$T$ the temperature of the environment) due to the second law of thermodynamics.\footnote{In \cite{JWZGB} we have
argued that the energy requirements for {\it reliable} erasure 
are even larger than Landauer's bound 
when the state of the energy source is noisy, for instance if it is given by   two thermodynamic reservoirs of different temperatures. For further different perspectives on Landauer's principle see, e.g.,  \cite{Maroney2009,Sagawa2014,Wolpert2015}.}
Bennett \cite{Bennett:73} clarified that
computation can be performed without logically irreversible operations and thus Landauer's argument does not prove any fundamental lower bound for the
energy needed by computation tasks without further specification.
Ref.~\cite{viva2002} argues that physical models of reversible computation should include the clocking mechanism (that control the implementation of logical gates)
because otherwise one neglects the question of  how to implement
clocking in a thermodynamically reversible way (after all, if both gates and clocking device 
are described as quantum systems then the influence
of the latter on the former would, to some extent, also
imply an influence of the former on the latter \cite{controlled_controller}). 

\subsection{External clocking and control signals as loopholes}

To motivate this work step by step
we first  discuss the  thermodynamics of clocking and synchronization briefly which is a sophisticated problem \cite{Steudel,clockentropy,SynchrEntropy,Erker16}. Ref.~\cite{SynchrEntropy}, for instance, studies
some synchronization protocols that  suggest that thermodynamically reversible synchronization 
requires to exchange {\it quantum} information, which links the a priori different tasks of
reversible computation and quantum computing.\footnote{Here, the formal distinction between
quantum and classical clock signals as well as the conversion of time information between them is based on the rather general framework introduced in \cite{clock}.} 

Going beyond the question of whether implementing reversible logical operations
is possible in a thermodynamically reversible way, we ask whether implementing
unitary operations on some quantum system is possible in a thermodynamically reversible way.
Regardless of how we call the physical devices controlling the implementation
(we called it `clock' in the case of computation processes), also the implementation of a unitary $U$  requires to `change Hamiltonians' -- except for the special case where
$U=e^{-iHt}$ with $H$ being the free Hamiltonian of the system of consideration. 
However, do we really have appropriate models
for discussing the thermodynamic cost of `changing a system's Hamiltonian'? After all, describing a control field in classical terms is only a valid approximation if it can be considered {\it macroscopic}. 
For instance, a `macroscopic' number of electrons, 
sufficiently distant from some probe particle under consideration, could create such a `classical' field. 
It is hard, however, to imagine a macroscopic 
controller whose energy consumption does not exceed  
the energy content of the microscopic target system.
This suggests that
discussing potential thermodynamic limitations 
requires {\it microscopic} models of control. 

For both tasks, computation and control, we are criticizing basically the same issue:
as long as the device controlling or triggering the operations (regardless of whether we call it `clock' or `controller') is not included in our microscopic description, we are skeptical about the claim that the operation could `in principle' be implemented in a thermodynamic cycle
without any energy cost. 

These remarks raise the following two questions:
(1) What are appropriate models for discussing
resource requirements of computation and control?
Given such a model, we need to ask (2) how to define
{\it resource requirements} within the model.

To discuss the cost of `changing Hamiltonians' we first recall that 
changing `effective Hamiltonians' is what is actually done:
Let the target system, for instance, be a single particle. Changing control fields
actually means to change the quantum state of the physical systems surrounding the particle.
In a certain mean-field limit, this state change amounts to the change of a classical field.
Thus, the particles interact according to a {\it fixed} Hamiltonian. Taking this perspective seriously, we are looking for a model where control
operations are implemented by a fixed interaction Hamiltonian if the states of the surrounding
quantum systems are tuned in an appropriate way. Ref.~\cite{Deffner2013}
also studies thermodynamic laws in a scenario where system, controller, and 
baths are coupled by a fixed time-independent Hamiltonian, while \cite{Horowitz2014} also considers autonomuous dynamics of {\it open} systems. Although the goal of the present paper is also to 
study thermodynamics in a scenario with autonomuous time evolution, we consider a model that is 
nevertheless general enough to enable controlling controllers by `meta'-controllers and so on.
This, in turn, requires to couple the  target system 
considered in  the first place to 
an infinite system that is not just a `heat bath' as it is often assumed but something that can be controlled and, further, act as a controller at the same time.

\subsection{Spin lattice Hamiltonians as autonomous models of computation}

As models for reversible computing, Hamiltonians on spin lattices have been constructed that
are able to perform computation \cite{Margolus:90} by their autonomous evolution. 
This addresses the above criticism in the sense that these models do not require any external clocking. 
Instead, synchronization is achieved by the fixed and spatially homogeneous interaction Hamiltonian itself. Refs.~\cite{Ergodic,ErgodicQutrits} go one step further and describe Hamiltonians
on spin lattices 
for which the result of the computation need not be read out within a certain time interval
because the {\it time average state} encodes the result. This solves the more subtle problem 
that otherwise the readout required an external clock. 

There are several properties that make spin lattices attractive as physical toy models of
the world (and not only as model for a computing device): the discrete lattice symmetry represents spatial homogeneity of the physical laws and the constant Hamiltonian the homogeneity in time. By looking at lattices as discrete approximations of
a field theoretical description of the physical world, even the presence and absence of matter can be seen as just being different states of the lattice. 
Accordingly, one can argue that spin lattices allow for a quite principled way of
 studying thermodynamics of computation and control because they model not only the computing device itself but also its interaction with the environment: to this end, just consider some region in the lattice as the {\it computing device} and the complement of that region as the {\it environment}.

\subsection{Why we propose to add physical universality}

For the purpose of developing our `toy thermodynamics of computing and control' we propose to
consider spin lattices or cellular automata (as their discrete analog) that satisfy the additional condition of {\it physical universality}
introduced in \cite{PhysUniversal}.  
This property will be explained 
and motivated on an informal level in the following section. Roughly speaking, physical universality means that 
the autonomous time evolution of the system is able to implement
any mathematically possible process 
on an arbitrarily large finite region after the complement 
of the region is prepared to an appropriate initial state. 
In the case of quantum systems, we mean by `mathematically possible'
the set of completely positive trace preserving maps. In the classical case, 
we refer to the set of stochastic maps. 
Given that one believes in the hypothesis that real physical systems admit in principle the implementation of any mathematically possible process\footnote{For critical remarks on  this postulate see \cite{Omnes}, Chapter~7: here doubts are raised that every self-adjoint operator in a multi-particle system can be measured in practice. 
However, there exists always a unitary
transformation that reduces the
observable to an observable that is diagonal in the tensor product basis, i.e., measurements of every single particle. Given that one believes that these individual measurements are always possible even for multi-partite systems, the doubts thus question the implementation of arbitrary unitaries. 
Further, Ref.~\cite{Adams2017} discusses the concept of physical universality for an understanding of life and also proposes to weaken physical universality -- just to mention a second critical point of view.}, 
it is natural to demand that the interaction at hand itself is able to
implement the transformation. Otherwise, the interaction 
does not fully describe the interface between system and its environment.
For the purpose of our thermodynamic considerations, however, we 
want to study systems whose interface is completely described by the interaction under consideration rather than relying on control operations
that come as additional, external, ingredients.  

The paper is structured as follows. Section~\ref{sec:universal}
briefly motivates the notion of physical universality introduced in \cite{PhysUniversal} for both Hamiltonians and cellular automata\footnote{Note that this paper contains several ideas that already appear in the preprint \cite{PhysUniversal}, but often less explicit than here. Since \cite{PhysUniversal} will not be published 
because its main purpose had been to state a question that has been solved in the meantime, we do not care about this overlap. 
}, although we focus on the latter for sake of simplicity.
Section~\ref{sec:setting} introduces the condition of physical universality formally and describes 
and discusses the notion of resource requirements introduced in \cite{PhysUniversal}, which is also the basis of this paper. Further, we raise
the question of  whether
the resource requirements of repeating 
a certain operation can grow sublinear in the number of repetitions
(which we argue to be necessary to justify the term 'thermodynamic cycle').
Section~\ref{sec:compUniversal} explains why CAs that are 
not physically universal may admit
thermodynamic cycles in our sense.
This is because they admit
 initializations of
a finite region that ensure the implementation of endless 
repetitions of the same control operation.
Section~\ref{sec:physUniversal} explains why this simple construction  is impossible
in {\it physically universal} CAs and shows that Schaeffer's CA does not admit sublinear growth. 
Whether no
physically universal CA admits sublinear growth has to be left to the future.

\section{Physical universality: informal description and possible consequences   \label{sec:universal}}

\subsection{Physically universal systems as consistent models of control}

Ref.~\cite{PhysUniversal} introduces the notion of physical universality for three types of
systems: 

\vspace{0.3cm}
\begin{tabular}{cl}
(1) & translationally invariant finite-range interaction Hamiltonians on infinite spin\\
 &   lattices,\\
(2) &  quantum cellular automata, and \\
(3) & classical cellular automata.
\end{tabular}
 
\vspace{0.3cm} 
\noindent
 While (1) is the model that is closest to physics, (2) and (3) describe increasing abstractions that are useful for our purpose.
Essentially, (2) is just the discrete time version of (1). We will restrict the attention
to (3) because it turns out that the problem is already difficult enough for this case.

On an abstract level, the definition of physical universality coincides for all three cases:
a system is called physically universal if every desired transformation on any desired target region (of  
arbitrary but finite size)
can
be implemented by first initializing the (infinite) complement of that region to an appropriate state
and then letting the system evolve according to its autonomous dynamics for a certain `waiting time' $t$.
For the cases (2) and (3), $t$ is a positive integer while it is a positive real number for the case (1). 
Since cases (1) and (2) refer to {\it quantum} systems the set of possible transformations
(completely positive trace preserving maps) is uncountably infinite, we should only demand that 
one can get {\it arbitrarily close} to the desired transformation via appropriate initializations and
waiting times instead of being able to implement the desired transformation exactly.

\paragraph{Shifting the boundary between
target and controller}

Physically universal systems are intriguing because they provide a model class where 
every physical degree of freedom is indirectly 
accessible by operating on the remaining degrees of freedom in the `world' and then letting the joint system evolve. In other words, the complement of the target region acts as the controller
of the target region so that any part of the world can become the controller or the  
system to be controlled. This is in contrast to some physical models of computation, e.g.,
\cite{Ergodic}, for which data and program registers are represented by different types of physical degrees of freedom. These systems are able to perform any desired transformation on the {\it data register} by appropriate initialization of the {\it program register}. The question of how to
act on the program register cannot be addressed within the model.
In physically universal systems, on the other hand, the preparation of
{\it any} region can be achieved by operating on its complement. 
This reduces the question of how to act on some target region to the  question of
how to act on some `controller' region around it. In turn, this controller region can be
prepared by acting on some `meta-controller' region around it.
Although this does not solve the problem it shows
at least that the boundary between controller and target region can be arbitrarily shifted.

\paragraph{Analogy to the quantum measurement problem}

This is similar to the
quantum measurement problem where the boundary between the measurement apparatus and the quantum system to be measured (the famous `Heisenberg cut') can be arbitrarily shifted as long as the quantum description is considered appropriate:
the transition 
from a pure superposition to
the corresponding mixture can be explained by entanglement between the target system
and its measurement aparatus \cite{Guilini96} (for simplicity, one may define `measurement apparatus' as all parts of the environment that carry information about the result). The resulting joint superposition of measurement apparatus 
and target system can be transferred to a mixture by entanglement with a `meta' measurement apparatus and so on. 

\subsection{Potential thermodynamic implications}

Physical universality can have important thermodynamic consequences because
it excludes the ability to completely protect information. 
Physically universality means that any system can be controlled by its surrounding. Therefore,  
the unknown state of the surrounding will eventually cause the state of the system to change.
In contrast, in systems such as \cite{Ergodic} the state of the program register never changes during the autonomous because of the strict separation between data and program registers.
Here, we don't want to accept the latter class of models as physical models of computation because 
in the real world also program registers are physical systems that can be somehow accessed by actions on their environment. In other words, the information of the `program' register is only safe because the model fails to describe how to act on that part of the system
using the given interactions (these actions are external to the theory).

\paragraph{Trade-off between stability and
controllability} 

Physical universality thus gives rise to a thermodynamics 
in which the {\it inability to protect} information is a result of the 
{\it ability to control} every degree of freedom. 
On the one hand, the target system needs to interact with its environment otherwise we were not able to control it. On the other hand, this interaction makes entropy leaking from the surrounding into the target system. Ref.~\cite{PhysUniversal} defines the model class of physically universal systems for the purpose of studying this conflict on an abstract level. Here, we restrict the attention
to discrete time dynamics on {\it classical cellular automata}. In the long run, one should certainly address our thermodynamic questions using {\it continuous time dynamics on quantum systems}. As a first approach, however, it is convenient to simplify the problem by restricting oneself to classical CAs.
Another reason for considering classical CAs is also to make this problem more accessible to the computer science community.\footnote{Note, further, that 
already von Neumann's self-reproducing automata
\cite{vonNeumann1966} follows the principle
to study physical or biological
universality properties using CAs.}
After all, it is one of the lessons learned from quantum information theory \cite{NC} that translating physics into computer scientific language can provide a new perspective and new paradigms. Indeed, the past two decades have shown that understanding thermodynamics via computer scientific models is also promising.\footnote{For instance, the principle of cooling devices \cite{Fernandez,dynamical_cooling} and heat engines
\cite{HeatEngines} can be illustrated using an $n$-bit register represented by $n$ two-level systems or other simple discrete systems. For this model class, the relation between physics and information is most obvious.} On the microscopic level one can hardly tell apart computing devices from thermodynamic machines in the conventional sense.\footnote{See also the adaptive heat engine in Ref.~\cite{Allahverdyan16}.}   
As part of this oversimplification, we will define the thermodynamic  cost of an operation
simply by the size of the region in the surrounding of the target system that needs to be initialized. This will be partly justified in Section~\ref{subsec:cost}.

\section{The formal setting \label{sec:setting}}

\subsection{Notation and terminology\label{subsec:notation}}

For the basic notation we mainly follow \cite{Schaeffer2015}. The cells of our CA in $d$ dimensions are located at lattice points
in $\Omega:=\Z^d$. The {\it state} of each cell is 
given by an element of the alphabet $\Sigma$. For any subset $X \subset \Omega$,
a {\it configuration} $\gamma_X$ of $X$ is
a map $X \to \Sigma$. Let $\Sigma^X$ denote the
set of all configurations of $X$. 
The dynamics of the CA is given by a map $\alpha:\Sigma^\Omega \to
\Sigma^\Omega$ that is local (i.e. 
the state of each cell is only influenced by the state of cells in a fixed neighborhood) 
and spatially homogeneous (i.e., it commutes with all lattice translations). 

Later, we will often consider a class of CAs 
in dimension $d=2$ where the state of a cell one time step later only depends on the state of the cell
itself and its $8$ surrounding neighbors,
the so-called Moore neighborhood, and refer to this class as `Moore CAs'.

If
$\gamma' :=\alpha (\gamma)$
for any $\gamma \in \Sigma^\Omega$,
we also write
$\gamma \to \gamma'$ to indicate that
the configuration $\gamma$ evolves to $\gamma'$ in one time step and $\gamma \stackrel{n}{\to} \gamma' = \alpha^n (\gamma)$ means that $\gamma$ evolves to $\gamma'$ in $n$ time steps.

\begin{Definition}[implementing a function]
Let $X,Y \subset \Omega$ be finite sets
and $f: \Sigma^X \to \Sigma^Y$ be an arbitrary function.
Then we say a configuration
$\phi\in \Sigma^{\bar{X}}$ implements $f$
in time $t$ if for every $x\in \Sigma^X$
\[
\phi \oplus x \stackrel{t}{\mapsto} \psi_x \oplus f(x), 
\]
holds
for some $\psi_x \in \Sigma^{\bar{Y}}$.  Here, the sign $\oplus$ denotes merging configurations
of disjoint regions to a configuration of the union.  
\end{Definition}

For physical universality, we follow
Schaeffer's modified definition \cite{Schaeffer2015}, which is equivalent to
our original one, and also his definition of
{\it efficiently} physically universal:

\begin{Definition}[physical universality]
We say a cellular automaton is physically universal if for all finite regions $X,Y$ and all transformations $f:\Sigma^X \to \Sigma^Y$, there exists a configuration 
$\phi$ of the complement of $X$ and a natural number $t \in \N$ such that $\phi$
implements $f$ in time $t$.

We say the CA is efficiently physically universal if the implementation runs in time $t_0$, where
$t_0$ is polynomial in

\noindent
$\bullet$ the diameter of $X$ (i.e., the width of the smallest hypercube containing the set) and
diameter of $Y$,

\noindent
$\bullet$ the distance between $X$ and $Y$, and

\noindent
$\bullet$ the computational complexity of $f$ under some appropriate model of computation (e.g.,
the number of logical gates in a circuit for $f$).
\end{Definition}

For simplicity, we will often consider only the case $Y=X$. Since every signal in our CA propagates only one cite per time step, at most 
 a margin of thickness $t$ around
$X$ matters for what happens after $t$ time steps. 
Depending on the dynamical law and the desired operation on the target region, the relevant part of the state can be significantly less.
To explore the resource requirements of an 'implementation'
we phrase the notion of an implementation formally in a way that is explicit about which parts of the surrounding cells
matter to achieve the desired operation:

\begin{Definition}[device for implementing $f$]\label{def:implement}
A device for implementing $f: \Sigma^X \rightarrow \Sigma^Y$ 
is a triple
$(Z,\phi_Z,t)$ 
such that
$\phi_Z \oplus \phi'$ implements $f$ in
$t$ time steps for all $\phi' \in \Sigma^{\bar{Z} \cap \bar{X}}$. 
Here, $X$ and $Y$ are called the `source region' and `target region', respectively, and
$Z \subset \bar{X} $ is called the `relevant region', $\phi_Z  \in \Sigma^{Z}$ the state of this region, and
$t \in \N$ the `implementation time'.
Then, the `size' of the device is the size of $W:=Z \cup X\cup Y$.
The `range' of the device is the
side length of the smallest $d$-dimensional hypercube containing $W $.
\end{Definition}

Note that the relevant region may overlap with the target region while it needs to be disjoint of the source region. 
Further, note that the definition of a device does not imply
that the relevant region has been chosen in a minimal way. 
Accordingly, future theorems on the resource requirements of implementations may read `the relevant region consists of
at least \dots  cells.'
The range can be seen as the size of the 
apparatus.  
Assume, for instance, that $W$ consists 
of a small number $n$ of single cells spread over
a hypercube of side length $k\gg n$. Then we would still call this
a `large' apparatus even if $n$ is small. 

So far, we have only considered the ability to implement one specific transformation once. We also want to be able to study processes where one desired  operation is performed
after time $t_1$, a second one after time $t_2+t_1$, and so on. 
Assume, for instance, that we want to achieve that 
the information content of a certain cell $c_1\in \Omega$ 
is shifted to cell $c_2$ after some time $t_1$ and then shifted to cell $c_3$ at some later time $t_2+t_1$. Then the entire process
consisting should be performed by one initialization
rather than demanding re-preparing the system after each transformation. 
To this end, 
we define devices for implementing concatenations of transformations
as generalization of Definition~\ref{def:implement}:

\begin{Definition}[device for implementing a sequence of transformations]\label{def:implSeq}
Let $X_1,\dots,X_{n+1}$ be finite regions
and $f_1,\dots,f_n$ be functions with $f_j:\Sigma^{X_j} \to \Sigma^{X_{j+1}}$ for $j=1,\ldots,n$. 
In other words, the target region of $f_j$ is the source region of $f_{j+1}$.   
A device for implementing the sequence $f_1,f_2,\dots,f_n$ is an
$n+2$-tuple 
$(Z,\phi_Z,t_1,\dots,t_n)$ with $t_j>0$, 
where  $Z\subset \bar{X}_1 $ is called the `relevant region'
and $\phi_Z \in \Sigma^Z$ is a configuration such that
$\phi_Z \oplus \phi'$ implements $f_j \circ f_{j-1} \circ \cdots \circ f_1$ in
$\sum_{i=1}^j t_i$ time steps for all $\phi' \in \Sigma^{\bar{Z} \cap \bar{X}_1}$. 
The size of the device is the size of
$W:=Z \cup  (\cup_{j=1}^{n+1} X_j)$ and its range is the
side length of the smallest $d$-dimensional hypercube containing $W$.
\end{Definition}

The idea of  Definition~\ref{def:implSeq} 
is that the CA implements the transformation $f_j$ 
within $t_j$ time steps, but this interpretation 
can be misleading because the Definition only specifies that the initial state $x$ is transformed into the final state 
\[
f_n(f_{n-1}(\cdots f_1(x)\cdots))
\]
 if the CA is not disturbed during the entire process.  This does not require, for instance, that an external intervention that changes the state of the region  $X_1$
from $f_1(x)$ to some $y$ between step $t_1$ and $t_1+1$
yields the final state $f_n(f_{n-1}(\cdots f_2(y)\cdots))$.\footnote{Rephrased in causal language \cite{Pearl2000}, if we denote the
state of $X_j$ at time $\sum_{i=1}^j t_i$ by $x_j$, then  
the equation 
\begin{equation}\label{eq:sem}
x_j = f_j(x_{j-1}),
\end{equation}
is  not a `structural equation', since
the latter describes, by definition,  also the
impact of interventions on the input variable
on the right hand side.} 

A priori it is not obvious that physical
universality entails the ability of implementing sequences with $n>1$. Th following result shows that this is the case:

\begin{Theorem}[ability to implement sequences]
In every physically universal CA there is a device
for any sequence of transformations. 
\end{Theorem}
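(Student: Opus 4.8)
The plan is to prove the theorem by induction on the length $n$ of the sequence, prepending one transformation at a time and invoking physical universality for a single combined ``set-up'' step at each stage. The structural fact I will lean on throughout is that a device in the sense of Definitions~\ref{def:implement} and~\ref{def:implSeq} is required to work \emph{for all} surrounding configurations $\phi'$. This robustness is precisely what will let me feed the input-dependent, uncontrolled residual state (``garbage'') produced by one stage into the next stage without having to re-initialize between them.

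For the base case $n=1$, physical universality furnishes a configuration on $\overline{X_1}$ together with a time $t_1$ implementing $f_1:\Sigma^{X_1}\to\Sigma^{X_2}$. Because each signal advances at most one cell per step, only the cells in the backward light cone of $X_2$ over $t_1$ steps can influence the output; restricting the configuration to the intersection of this light cone with $\overline{X_1}$ produces a \emph{finite} relevant region $Z\subset\overline{X_1}$, and $(Z,\phi|_Z,t_1)$ is then a device, since any $\phi'$ outside the light cone leaves the state on $X_2$ at time $t_1$ unchanged. I will carry the finiteness of the relevant region as an invariant through the induction, as it is exactly what makes physical universality applicable at the next stage.

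For the inductive step, assume a device $(Z',\phi_{Z'},t_2,\dots,t_n)$ with finite $Z'\subset\overline{X_2}$ for the shorter sequence $f_2,\dots,f_n$, and write $g_j:=f_j\circ\cdots\circ f_1$. I define the set-up function
\[
F:\Sigma^{X_1}\to\Sigma^{Z'\cup X_2},\qquad F(x)=\phi_{Z'}\oplus f_1(x),
\]
which is well defined because $Z'\cap X_2=\emptyset$ and whose target region is finite. Physical universality yields a configuration on $\overline{X_1}$ and a time $t_1>0$ implementing $F$; truncating to the backward light cone of $Z'\cup X_2$ (intersected with $\overline{X_1}$) gives a finite relevant region $Z\subset\overline{X_1}$ and a state $\phi_Z$. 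I then claim that $(Z,\phi_Z,t_1,t_2,\dots,t_n)$ is a device for $f_1,\dots,f_n$. Indeed, for every $\phi'$ the evolution of $\phi_Z\oplus\phi'\oplus x$ up to time $t_1$ writes $\phi_{Z'}$ on $Z'$ and $f_1(x)=g_1(x)$ on $X_2$, which settles the requirement at time $T_1=t_1$. From time $t_1$ onward the global state meets the precondition of the inner device---$\phi_{Z'}$ on $Z'$, the input $f_1(x)$ on $X_2$, and an arbitrary configuration elsewhere---so the induction hypothesis guarantees that $X_{j+1}$ carries $(f_j\circ\cdots\circ f_2)(f_1(x))=g_j(x)$ at time $t_1+\sum_{i=2}^{j}t_i=\sum_{i=1}^{j}t_i=T_j$ for every $j\ge 2$. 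This verifies all $n$ requirements at once.

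The step I expect to be the crux is the matching of the inner device's precondition at time $t_1$: one must observe that the uncontrolled residual $\psi_x$ left behind by the set-up step lies entirely in $\overline{Z'}\cap\overline{X_2}$, exactly the region on which the induction hypothesis already asserts correct behaviour ``for all $\phi'$.'' Without this tolerance of arbitrary, input-dependent surroundings the naive chaining would break, since there is no opportunity to re-prepare the lattice between stages; the whole idea is to bundle ``apply $f_1$'' and ``install the next device'' into the single map $F$ and realize it with one use of physical universality. The only routine points to dispatch are ensuring $t_1>0$ (which holds automatically whenever $F$ is non-trivial) and preserving finiteness of every relevant region, so that physical universality, which applies only to finite source and target regions, remains available at each stage.
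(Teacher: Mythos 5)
Your proof is correct, and it is essentially the mirror image of the paper's argument: both hinge on the same trick of augmenting one function of the sequence so that its output additionally writes the constant state of the next device's relevant region, and both exploit the fact that Definitions~\ref{def:implement} and~\ref{def:implSeq} demand correctness for \emph{all} surrounding configurations $\phi'$, which is what lets the input-dependent garbage left by one stage be tolerated by the next. The difference is the direction of the induction. The paper peels off the \emph{last} function: physical universality is invoked for the bare $f_{n+1}$, giving a device $(Z_{n+1},\phi_{Z_{n+1}},t_{n+1})$, and the induction hypothesis is applied to the modified prefix $f_1,\dots,f_{n-1},\hat f_n$, where $\hat f_n$ is augmented to also output $\phi_{Z_{n+1}}$ on $Z_{n+1}$. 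You peel off the \emph{first} function: the induction hypothesis is applied to the untouched tail $f_2,\dots,f_n$, and physical universality is invoked for the augmented head $F(x)=\phi_{Z'}\oplus f_1(x)$. The two routes are logically equivalent and of the same length; what yours adds is that it makes explicit two points the paper leaves implicit, namely (i) that physical universality only provides a configuration of the whole complement $\bar{X}_1$, so a light-cone truncation is needed to obtain a \emph{finite} relevant region, without which the augmented map would not have a finite target region and neither universality nor the induction hypothesis would remain applicable at the next stage, and (ii) that the residual state $\psi_x$ after the set-up step lies in $\bar{Z'}\cap\bar{X}_2$, exactly the region on which the inner device tolerates arbitrary configurations. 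The paper's ordering has the minor aesthetic feature that physical universality is only ever applied to the original functions $f_j$ rather than to an augmented map, but nothing substantive hinges on which end of the sequence one peels.
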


\begin{proof}
We provide a proof by induction on $n$.  The base case $n=1$ follows from physical universality.
For the induction hypothesis assume that sequences of $n$ arbitrary functions can be implemented. 

For the induction step, let $f_1,\ldots,f_n,f_{n+1}$ be a sequence of $n+1$ arbitrary transformations with 
\[
f_j : \Sigma^{X_j} \rightarrow \Sigma^{X_{j+1}}
\]
for $j=1,\ldots,n+1$.

By physical universality there exists a device $(Z_{n+1},\phi_{Z_{n+1}},t_{n+1})$ with $Z_{n+1} \subset \bar{X}_{n+1}$ that implements the last 
function $f_{n+1}$ of the above sequence.  Using this device we define the following augmented version $\hat{f}_n$ of the second last function $f_n$ of the above sequence by setting
\[
\hat{f}_n : \left\{
\begin{array}{ccccc}
\Sigma^{X_n} & \rightarrow & \Sigma^{X_{n+1}} & \cup   & \Sigma^{Z_{n+1}} \\ 
          x          & \mapsto     & f_n(y)                     & \oplus & \phi_{Z_{n+1}}
\end{array}
\right.
\]
for all $y\in\Sigma^{X_n}$. In words,
the output of the augmented function $\hat{f}_n$ consists of the output of original function $f_n$ on the region $X_n$ and the constant output $\phi_{Z_{n+1}}$ on the region $Z_{n+1}$.  

By induction hypothesis there exists a device $(Z,\phi,t_1,\ldots,t_{n-1},t_n)$ that implements the sequence $f_1,\ldots,f_{n-1},  \hat{f}_n$.  The special form of the output of the augmented function $\hat{f}_n$ ensures that the device
$(Z,\phi,t_1,\ldots,t_{n-1},t_n,t_{n+1})$ also implements the sequence $f_1,\ldots,f_{n-1},f_n,f_{n+1}$. This is because after $t_1+\ldots +t_n$ times steps the output is
\[
\hat{f}_n(y) = f_n(y) \oplus \phi_{Z_{n+1}} \in \Sigma_{X_{n+1}} \cup \Sigma_{Z_{n+1}} \quad \mbox{where} \quad y = f_{n-1}(\ldots (f_1(x)) \ldots )
\]
so that after $t_{n+1}$ additional time steps the final output is 
\[
f_{n+1}(f_n(y))\in\Sigma^{X_{n+2}}
\]
as desired.
\end{proof}

To mention a simple example of the kind of sequences we are interested in, consider a CA with binary alphabet $\Sigma=\{0,1\}$.
Assume 
the task is to implement a NOT gate on the same bit $n$ times on some target bit. 
Then the desired functions read 
$f_j={\rm NOT}$
and 
the numbers $t_j$
specify the time instants for which the autonomous dynamics has implemented another NOT gate on our target bit, given that some region $Z$ has been initialized to the state $\phi_Z$.  

\subsection{Formalizing `thermodynamic cost' of operations \label{subsec:cost}}

Here we will consider the size of the relevant region as the thermodynamic
cost of an implementation. This first approximation is justified by the following idea: a priori, the state
of each cell is unknown, i.e., we assume uniform distribution over $\Sigma$. According to
Landauer's principle it then requires the energy $k T \ln |\Sigma|$ to initialize one cell
to the desired state. This way, the thermodynamic cost of the initialization process
is simply proportional to the number of cells to be initialized.
This view will be further discussed at the end of this subsection. 

Note that the size of the relevant region can only grow with $O(t^d)$ if $t$ is the running time for an implementation since a signal can only proceed a constant number of cells per time step. Therefore, the thermodynamic 
cost scales only polynomial in the computational complexity if a CA is efficiently physically universal. This statement, however, is too weak for our purpose.
To phrase the main questions of this paper (which look for stronger statements) 
we need the following terminology: 

\begin{Definition}[zero cost per operation]
Given a function $f: \Sigma^X \to \Sigma^X$, 
a physically universal CA is said to admit
the implementation of $f$ at zero cost per operation, if 
there are devices 
$(Z_n,\phi_{Z_n},t_1,\dots,t_n)$ for every $n\in \N$, each implementing $f$, such that
\[
\lim_{n\to \infty} \frac{|Z_n|}{n} =0.
\]
\end{Definition}
Note that this definition does not require that the
implementation of $f$ stops after the time $t_n$. 
Likewise, we define:
\begin{Definition}[zero cost of information storage per time]
For some region $X$,
a physically universal CA is said to
admit zero cost of information storage per time on $X$ if 
there are devices $(Z_n,\phi_{Z_n},t_n)$ 
for every $n\in \N$ with $t_n\to \infty$ 
that implement the identity on $X$ after the time $t_n$
such that
\[
\lim_{n\to \infty} \frac{|Z_n|}{t_n} =0.
\]
\end{Definition}

\noindent
We are now able to phrase our main questions:

\begin{itemize}
\item {\bf Question 1:} 
Does there exist a physically universal CA
that admits zero cost per operation for
any / for all functions $f$?

\item {\bf Question 2:}
Does there exist a physically universal CA
that admits zero cost for information storage per time for any / for all finite regions $X$?
\end{itemize}

\noindent
If we recall that the state of the CA may also encode the presence or absence of matter, our definition of implementation cost also includes the aspect of hardware deterioration. Assume
one has built some microscopic control device that degrades
after performing an operation some large number $n_0$ of times, a device for implementing the operation $n>n_0$ times includes
a `meta' device repairing the original one. 
\footnote{Thermodynamic considerations
that account also for reproduction processes 
are certainly related to thermodynamics of life \cite{Kempes2017}.}

On the one hand, we will show that the answers are negative for Schaeffer's CA  \cite{Schaeffer2015} to both questions above. On the other hand, we will show that there exist physically non-universal CAs for which both answers are positive. We leave it as an open question whether 
physically universality precludes the ability to achieve zero cost.
However, we give some intuitive arguments that suggest that physical universality makes it at least more difficult to achieve zero implementation cost per operation 
or zero cost for information storage per time. 

\paragraph{Discussion of the above formalization of thermodynamic cost}
It is certainly an oversimplification to identify the size of the region that needs to be initialized with the thermodynamic cost of an implementation. Consider, for instance, 
a physical many particle system where each cell is a physical system that is weakly interacting
with its neighbors. This ensures that the total energy of the composed system
is approximately given by the sum of the energy of the individual systems.  
Assume, furthermore, that
the state $0\in \Sigma$ corresponds to the ground state, that is, the state of 
lowest energy. In the limit of low temperatures, this state has probability close to $1$, which implies that initializing the lattice to the all-zero state 
does not require significant free energy resources. In this case, however, it requires
significant free energy resources to set a cell to {\it any state  other than} $0$
and the resource requirement then depends on the number of cells that need to be in a non-zero state (which may correspond to the number of particles in physics).

On the other hand, identifying the number of cells to be initialized with the thermodynamic cost, can also be justified from the following point of view: assume we are not interested in the amount of free energy that is required for one specific transformation. Instead, we only ask whether
the amount increases sublinearly or not. Assuming, in the above physical picture, 
non-zero temperature (although it may still be low, which favors the state $0$), initializing
$n$ states to $0$ {\it with certainty} yet requires an amount of free energy {\it of the order} $n$. This way, the asymptotic behavior of resource requirements is unaffected  by the details of the physical hardware assumptions. 

\section{Cost of operations in Turing complete CAs\label{sec:compUniversal}}

As a simple toy example, we consider the control task of repeatedly turning on and off a target bit without ever stopping. 
Intuitively, this process already reminds us of a program with an infinite loop:

\begin{Example}[infinite bit switching]
\begin{algorithmic}
 \State
 \State $a:=0$ 
 \While{$(\,1\,)$ }
   \State $a:=1 \oplus a$ \quad // bit XOR
 \EndWhile
\end{algorithmic}
\end{Example}
\noindent
Every Turing-complete CA is capable of implementing the above program. We now explain briefly the notion of Turing-complete CAs.   
A CA is called Turing-complete if there exists a finite configuration
that allows the CA to simulate any universal Turing machine, where the concepts of
`finite configuration' and `halting' are defined as follows.   
`Finite configuration' means that only finitely many cells are in a non-zero state, where a single element of the alphabet $\Sigma$ is chosen to be zero, denoted by $0$.
`Halting' is defined as the event of a single previously selected cell becoming non-zero.

It is important to observe that {\it finite configuration} does not imply {\it finite resources} in our sense.
`Finite configuration' means that all but a finite number of cells are in the {\it zero} state, whereas 
`finite resources' means that all but a finite number of cells are in an {\it unknown} state. 

Consider the following situtation: the simulation of a universal Turing machine by a CA could {\it require} that all but a finite number of cells be zero because otherwise the non-zero cells would eventually perturb the simulation.  This would mean {\it infinite} resources in our sense. 
However, as long as we do not demand physical universality, we can easily modify Turing complete
CAs such that they are able to implement an infinite loop with finite resources,
as will be discussed in the following two subsections.  

\subsection{Conway's Game of Life}

We first consider the implementation of our target operation  `infinite bit switch' in a well-known cellular automaton,
namely Conway's Game of Life. It is a CA in two dimensions, each cell being `alive' or 'dead', i.e.,
formally each cell is just one bit. The rules are \cite{GameOfLifeWiki}: 

\noindent
(1) Any live cell with fewer than two live neighbours dies, as if caused by under-population.

\noindent
(2) Any live cell with two or three live neighbours lives on to the next generation.

\noindent
(3) Any live cell with more than three live neighbours dies, as if by over-population.

\noindent
(4) Any dead cell with exactly three live neighbours becomes a live cell, as if by reproduction.

To implement the bit flip, as desired, we find
simple oscillating patterns in \cite{GameOfLifeWiki}:
The `Blinker' has period 2, as shown in Figure~\ref{fig:blinker}.
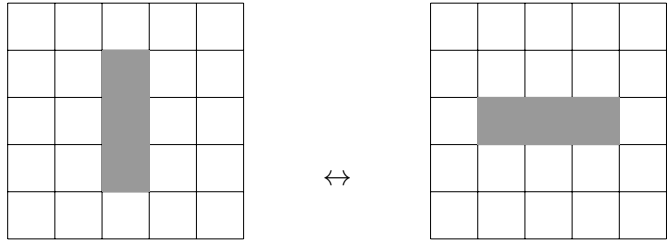
\begin{figure}
\centerline{
\resizebox{0.2\textwidth}{!}{
\begin{tikzpicture}
\draw[step=1cm,black,very thin] (-2,-2) grid (3,3);
\fill[black!40!white] (0,0) rectangle (1,1);
\fill[black!40!white] (0,-1) rectangle (1,1);
\fill[black!40!white] (0,2) rectangle (1,1);
\end{tikzpicture}
}
\hspace{0.5cm}
\begin{tabular}{c}
$\leftrightarrow$\\
${}$\\
${}$\\
\\
\end{tabular}
\hspace{0.5cm}
\resizebox{0.2\textwidth}{!}{
\begin{tikzpicture}
\draw[step=1cm,black,very thin] (-2,-2) grid (3,3);
\fill[black!40!white] (0,0) rectangle (1,1);
\fill[black!40!white] (-1,0) rectangle (1,1);
\fill[black!40!white] (2,0) rectangle (1,1);
\end{tikzpicture}
}
}
\caption{\label{fig:blinker}
A simple configuration in Conway's Game of Life that yields a dynamical behavior with period $2$. The system changes between the two configurations on the left and the right hand side, respectively. `Alive' and `dead' cells are indicated by gray and white, respectively.} 
\end{figure}
We now focus on the space requirements
of this 2-cycle and recall that
space requirements in our sense refer to the amount of space that needs to be initialized to a specific value. 
For the Blinker to work, it is essential that there are no `particles' in the direct neighborhood that disturb the patterns. 
Whenever there is a region 
outside which the state is not known at all, this complementary region contains with some probability a pattern that moves towards the blinker and disturbs its cycle.  It is therefore possible, that, without having some control about the entire space, we cannot guarantee that the blinker works forever.

%
%
\subsection{Modified Game of Life with impenetrable walls}

There is, however, a simple modification of the Game of Life for which we can ensure that the blinker works forever although we only control the state of a finite region. To this end, we augment each cell by an additional third state `brick' $\blacksquare$, indicated  by black color,
that blocks the diffusion from the surrounding.  
The transition rule of the new CA now consist of
the following rules:

\noindent
(0) a cell being in the state $\blacksquare$ remains there forever. 
(1)-(4) as before, with the convention that the brick $\blacksquare$ counts
as $\square$ for its neighbors. 

The idea of bricks is that they can form a `wall'
around our blinker that protects it from 
the influence of its surrounding (which can be in an unknown state).
In physical terms, the wall protects the blinker from the heat of the environment, as shown in Figure~\ref{fig:withwall}.

\begin{figure}
\centerline{
\resizebox{0.3\textwidth}{!}{
\begin{tikzpicture}
\draw[step=1cm,black,very thin] (-3,-3) grid (6,6);
\fill[black!40!white] (1,1) rectangle (2,2);
\fill[black!40!white] (0,1) rectangle (1,2);
\fill[black!40!white] (2,1) rectangle (3,2);
\fill[black] (-2,-2) rectangle (-1,-1);
\fill[black] (-1,-2) rectangle (0,-1);
\fill[black] (0,-2) rectangle (1,-1);
\fill[black] (1,-2) rectangle (2,-1);
\fill[black] (2,-2) rectangle (3,-1);
\fill[black] (3,-2) rectangle (4,-1);
\fill[black] (4,-2) rectangle (5,-1);
\fill[black] (4,-1) rectangle (5,0);
\fill[black] (4,0) rectangle (5,1);
\fill[black] (4,1) rectangle (5,2);
\fill[black] (4,2) rectangle (5,3);
\fill[black] (4,3) rectangle (5,4);
\fill[black] (4,4) rectangle (5,5);
\fill[black] (3,4) rectangle (4,5);
\fill[black] (2,4) rectangle (3,5);
\fill[black] (1,4) rectangle (2,5);
\fill[black] (0,4) rectangle (1,5);
\fill[black] (-1,4) rectangle (0,5);
\fill[black] (-2,4) rectangle (-1,5);
\fill[black] (-2,3) rectangle (-1,4);
\fill[black] (-2,2) rectangle (-1,3);
\fill[black] (-2,1) rectangle (-1,2);
\fill[black] (-2,0) rectangle (-1,1);
\fill[black] (-2,-1) rectangle (-1,0);
\end{tikzpicture}
}
}
\caption{\label{fig:withwall} The blinker surrounded by a wall of `bricks', which protect it from uncontrolled perturbations from its environment.}
\end{figure}
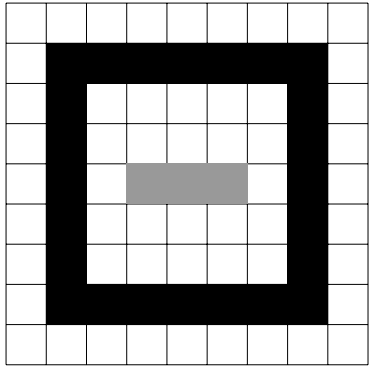

\subsection{Reversible CA: Margolus' billard ball model \label{subsec:mbbm}}

To get one step closer to physics and
account for the bijectivity of microscopic dynamics
in the physical world, we now consider reversible CAs, i.e.,
CAs in which every state has a unique predecessor, which is not the case for Game of Life. 
We now show that even reversible CAs exist that admit perfect protection of an implementation of
an infinite loop, which results in zero cost per operation.

{\it Margolus' billard ball model CA} \cite{Margolus84} is a CA in $2$ dimensions whose update rules are defined
on Margolus neighborhoods, i.e., there are two partitions of the grid
into blocks of $2\times 2$ cells
describing the updates at even and odd time instants:
At even time instances, the update is done
on the blocks $\{(2i,2j),(2i,2j+1),(2i+1,2j), (2i+1,2j+1)\}$, at odd times it is done
on the blocks
$\{(2i-1,2j-1),(2i-1,2j),(2i,2j-1),(2i,2j)\}$, as visualized by the black and the red grid
in Figure~\ref{fig:bbmca}, right.
For each such block, the update rules are shown in Figure~\ref{fig:bbmca}, left. 
To interpret such a CA with Margolus neighborhood as a so-called Moore CA  where the update rules do not change between even and odd time steps (see Subsection~\ref{subsec:notation}), we consider two time steps in the Margolus CA as one time step of a  Moore CA.
To ensure that the update of a cell of the Moore CA only depends on its surrounding neighbors
(which is convenient for some purposes)
one may consider each $2\times 2$ block of the Margolus CA as one cell of the Moore CA. 
 
\begin{figure}
\centerline{
\begin{tabular}{cc}
(1) & 
\resizebox{0.2\textwidth}{!}{
\begin{tikzpicture}
\draw[step=1cm,black,very thin] (0,0) rectangle (2,2);
\draw[dotted] (0,1) -- (2,1);
\draw[dotted] (1,0) -- (1,2);
\draw[very thick,->] (2.5,1) -- (3.5,1);
\draw[step=1cm,black,very thin] (4,0) rectangle (6,2);
\draw[dotted] (4,1) -- (6,1);
\draw[dotted] (5,0) -- (5,2);
\end{tikzpicture}
}
\\
(2) & \resizebox{0.2\textwidth}{!}{
\begin{tikzpicture}
\draw[step=1cm,black,very thin] (0,0) rectangle (2,2);
\draw[dotted] (0,1) -- (2,1);
\draw[dotted] (1,0) -- (1,2);
\fill[black!40!white] (0,1) rectangle (1,2);
\draw[very thick,->] (2.5,1) -- (3.5,1);
\draw[step=1cm,black,very thin] (4,0) rectangle (6,2);
\draw[dotted] (4,1) -- (6,1);
\draw[dotted] (5,0) -- (5,2);
\fill[black!40!white] (5,0) rectangle (6,1);
\end{tikzpicture}
}
\\
(3) & 
\resizebox{0.2\textwidth}{!}{
\begin{tikzpicture}
\draw[step=1cm,black,very thin] (0,0) rectangle (2,2);
\draw[dotted] (0,1) -- (2,1);
\draw[dotted] (1,0) -- (1,2);
\fill[black!40!white] (0,1) rectangle (1,2);
\fill[black!40!white] (1,0) rectangle (2,1);
\draw[very thick,->] (2.5,1) -- (3.5,1);
\draw[step=1cm,black,very thin] (4,0) rectangle (6,2);
\draw[dotted] (4,1) -- (6,1);
\draw[dotted] (5,0) -- (5,2);
\fill[black!40!white] (4,0) rectangle (5,1);
\fill[black!40!white] (5,1) rectangle (6,2);
\end{tikzpicture}
}
\\
(4) & 
\resizebox{0.2\textwidth}{!}{
\begin{tikzpicture}
\draw[step=1cm,black,very thin] (0,0) rectangle (2,2);
\draw[dotted] (0,1) -- (2,1);
\draw[dotted] (1,0) -- (1,2);
\fill[black!40!white] (0,1) rectangle (1,2);
\fill[black!40!white] (0,0) rectangle (1,1);
\draw[very thick,->] (2.5,1) -- (3.5,1);
\draw[step=1cm,black,very thin] (4,0) rectangle (6,2);
\draw[dotted] (4,1) -- (6,1);
\draw[dotted] (5,0) -- (5,2);
\fill[black!40!white] (4,0) rectangle (5,1);
\fill[black!40!white] (4,1) rectangle (5,2);
\end{tikzpicture}
}
\\
(5) & 
\resizebox{0.2\textwidth}{!}{
\begin{tikzpicture}
\draw[step=1cm,black,very thin] (0,0) rectangle (2,2);
\draw[dotted] (0,1) -- (2,1);
\draw[dotted] (1,0) -- (1,2);
\fill[black!40!white] (1,1) rectangle (2,2);
\fill[black!40!white] (1,0) rectangle (2,1);
\fill[black!40!white] (0,0) rectangle (1,1);
\draw[very thick,->] (2.5,1) -- (3.5,1);
\draw[step=1cm,black,very thin] (4,0) rectangle (6,2);
\draw[dotted] (4,1) -- (6,1);
\draw[dotted] (5,0) -- (5,2);
\fill[black!40!white] (4,0) rectangle (5,1);
\fill[black!40!white] (5,1) rectangle (6,2);
\fill[black!40!white] (5,0) rectangle (6,1);
\end{tikzpicture}
}
\\
(6) & 
\resizebox{0.2\textwidth}{!}{
\begin{tikzpicture}
\draw[step=1cm,black,very thin] (0,0) rectangle (2,2);
\draw[dotted] (0,1) -- (2,1);
\draw[dotted] (1,0) -- (1,2);
\fill[black!40!white] (1,1) rectangle (2,2);
\fill[black!40!white] (1,0) rectangle (2,1);
\fill[black!40!white] (0,0) rectangle (1,1);
\fill[black!40!white] (0,1) rectangle (1,2);
\draw[very thick,->] (2.5,1) -- (3.5,1);
\draw[step=1cm,black,very thin] (4,0) rectangle (6,2);
\draw[dotted] (4,1) -- (6,1);
\draw[dotted] (5,0) -- (5,2);
\fill[black!40!white] (4,0) rectangle (5,1);
\fill[black!40!white] (5,1) rectangle (6,2);
\fill[black!40!white] (5,0) rectangle (6,1);
\fill[black!40!white] (4,1) rectangle (5,2);
\end{tikzpicture}
}
\end{tabular}
\hspace{2cm}
\begin{tabular}{c}
\resizebox{0.3\textwidth}{!}{
\begin{tikzpicture}
\draw[step=2cm,black,thin] (0,0) grid (8,8);
\draw[red,thin] (1,0) -- (1,8);
\draw[red,thin] (3,0) -- (3,8);
\draw[red,thin] (5,0) -- (5,8);
\draw[red,thin] (7,0) -- (7,8);
\draw[red,thin] (0,1) -- (8,1);
\draw[red,thin] (0,3) -- (8,3);
\draw[red,thin] (0,5) -- (8,5);
\draw[red,thin] (0,7) -- (8,7);
\end{tikzpicture}
}
\end{tabular}
}
\caption{\label{fig:bbmca} Left: Transition rules of Margolus' billiard ball model CA. Right:
the two different partitions are indicated by the black and the red grid.}
\end{figure}
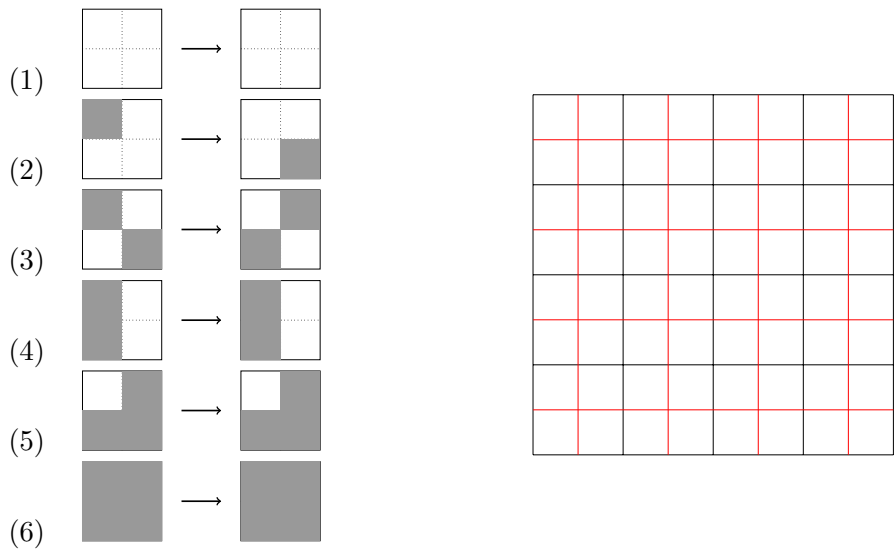

As noted in \cite{Schaeffer2015},
the billiard ball CA is not physically universal since it allows for impenetrable walls \cite{Margolus84}.
We will use such walls to implement a bit switching process that continues forever although only a finite region has been initialized. 
A simple example is shown in Figure~\ref{fig:switchingbbmca}.
In the sense of the present paper, this CA implements the NOT
operation in a thermodynamic cycle since there are no resource requirements per operation because there is no need to initialize the cells outside the wall. 

\begin{figure}
\centerline{
\resizebox{0.4\textwidth}{!}{
\begin{tikzpicture}
\draw[step=2cm,black,thin] (0,0) grid (10,10);
\draw[red,thin] (1,0) -- (1,10);
\draw[red,thin] (3,0) -- (3,10);
\draw[red,thin] (5,0) -- (5,10);
\draw[red,thin] (7,0) -- (7,10);
\draw[red,thin] (9,0) -- (9,10);
\draw[red,thin] (0,1) -- (10,1);
\draw[red,thin] (0,3) -- (10,3);
\draw[red,thin] (0,5) -- (10,5);
\draw[red,thin] (0,7) -- (10,7);
\draw[red,thin] (0,9) -- (10,9);
\fill[black!40!white] (1,2) rectangle (3,4);
\fill[black!40!white] (3,2) rectangle (5,4);
\fill[black!40!white] (5,2) rectangle (7,4);
\fill[black!40!white] (7,2) rectangle (9,4);
\fill[black!40!white] (1,4) rectangle (3,6);
\fill[black!40!white] (7,4) rectangle (9,6);
\fill[black!40!white] (1,6) rectangle (3,8);
\fill[black!40!white] (3,6) rectangle (5,8);
\fill[black!40!white] (5,6) rectangle (7,8);
\fill[black!40!white] (7,6) rectangle (9,8);
\fill[black!40!white] (4,4) rectangle (5,5);
\fill[black!40!white] (5,5) rectangle (6,6);
\fill[black!40!white] (3,4) rectangle (4,5);
\fill[black!40!white] (3,5) rectangle (4,6);
\fill[black!40!white] (6,4) rectangle (7,5);
\fill[black!40!white] (6,5) rectangle (7,6);
\end{tikzpicture}
}
\hspace{1cm}
\resizebox{0.4\textwidth}{!}{
\begin{tikzpicture}
\draw[step=2cm,black,thin] (0,0) grid (10,10);
\draw[red,thin] (1,0) -- (1,10);
\draw[red,thin] (3,0) -- (3,10);
\draw[red,thin] (5,0) -- (5,10);
\draw[red,thin] (7,0) -- (7,10);
\draw[red,thin] (9,0) -- (9,10);
\draw[red,thin] (0,1) -- (10,1);
\draw[red,thin] (0,3) -- (10,3);
\draw[red,thin] (0,5) -- (10,5);
\draw[red,thin] (0,7) -- (10,7);
\draw[red,thin] (0,9) -- (10,9);
\fill[black!40!white] (1,2) rectangle (3,4);
\fill[black!40!white] (3,2) rectangle (5,4);
\fill[black!40!white] (5,2) rectangle (7,4);
\fill[black!40!white] (7,2) rectangle (9,4);
\fill[black!40!white] (1,4) rectangle (3,6);
\fill[black!40!white] (7,4) rectangle (9,6);
\fill[black!40!white] (1,6) rectangle (3,8);
\fill[black!40!white] (3,6) rectangle (5,8);
\fill[black!40!white] (5,6) rectangle (7,8);
\fill[black!40!white] (7,6) rectangle (9,8);
\fill[black!40!white] (4,5) rectangle (5,6);
\fill[black!40!white] (5,4) rectangle (6,5);
\fill[black!40!white] (3,4) rectangle (4,5);
\fill[black!40!white] (3,5) rectangle (4,6);
\fill[black!40!white] (6,4) rectangle (7,5);
\fill[black!40!white] (6,5) rectangle (7,6);
\end{tikzpicture}
}
}
\caption{\label{fig:switchingbbmca}
Configuration of Margolus billiard ball CA
 that implements
bit switching forever: applying Rule (3) to the black partitioning takes the configuration on the left hand side to the one on the right hand side. Then,  an update according to the red partitioning leaves the state unchanged
due to Rule (5). 
Applying Rule (3) to the black partitioning takes the configuration on the right hand side back to the one on the left hand side. Again, updating according to the red partitioning has no effect.}
\end{figure}
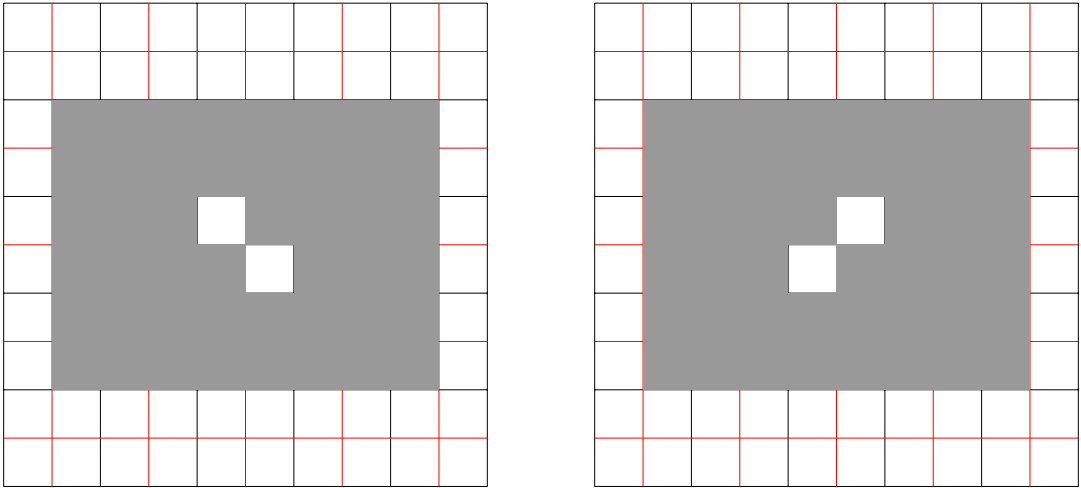

\section{Cost of operations in physically universal CAs\label{sec:physUniversal}}

\subsection{Schaeffer's physically universal CA}

Schaeffer \cite{Schaeffer2015}, see also
\cite{SchaefferWebsite}, constructed an
efficiently physically universal CA that is close to Margolus' billiard ball model CA. 
The update rules are shown in Figure~\ref{fig:schaeffer}. Here, physical universality refers to the Moore CA whose update rule consists of
two time steps of the Margolus CA
(following the remarks in Subsection~\ref{subsec:mbbm}).
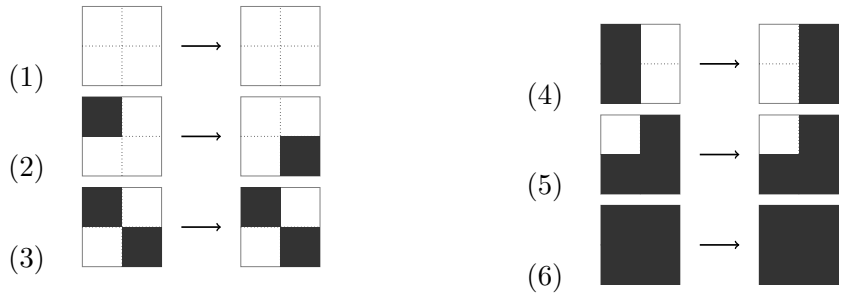
\begin{figure}
\centerline{
\begin{tabular}{cc}
(1) & 
\resizebox{0.2\textwidth}{!}{
\begin{tikzpicture}
\draw[step=1cm,gray,very thin] (0,0) rectangle (2,2);
\draw[dotted] (0,1) -- (2,1);
\draw[dotted] (1,0) -- (1,2);
\draw[very thick,->] (2.5,1) -- (3.5,1);
\draw[step=1cm,gray,very thin] (4,0) rectangle (6,2);
\draw[dotted] (4,1) -- (6,1);
\draw[dotted] (5,0) -- (5,2);
\end{tikzpicture}
}
\\
(2) & 
\resizebox{0.2\textwidth}{!}{
\begin{tikzpicture}
\draw[step=1cm,gray,very thin] (0,0) rectangle (2,2);
\draw[dotted] (0,1) -- (2,1);
\draw[dotted] (1,0) -- (1,2);
\fill[black!80!white] (0,1) rectangle (1,2);
\draw[very thick,->] (2.5,1) -- (3.5,1);
\draw[step=1cm,gray,very thin] (4,0) rectangle (6,2);
\draw[dotted] (4,1) -- (6,1);
\draw[dotted] (5,0) -- (5,2);
\fill[black!80!white] (5,0) rectangle (6,1);
\end{tikzpicture}
}
\\
(3) & 
\resizebox{0.2\textwidth}{!}{
\begin{tikzpicture}
\draw[step=1cm,gray,very thin] (0,0) rectangle (2,2);
\draw[dotted] (0,1) -- (2,1);
\draw[dotted] (1,0) -- (1,2);
\fill[black!80!white] (0,1) rectangle (1,2);
\fill[black!80!white] (1,0) rectangle (2,1);
\draw[very thick,->] (2.5,1) -- (3.5,1);
\draw[step=1cm,gray,very thin] (4,0) rectangle (6,2);
\draw[dotted] (4,1) -- (6,1);
\draw[dotted] (5,0) -- (5,2);
\fill[black!80!white] (4,1) rectangle (5,2);
\fill[black!80!white] (5,0) rectangle (6,1);
\end{tikzpicture}
}
\end{tabular}
\hspace{2cm}
\begin{tabular}{cc}
\\
(4) & 
\resizebox{0.2\textwidth}{!}{
\begin{tikzpicture}
\draw[step=1cm,gray,very thin] (0,0) rectangle (2,2);
\draw[dotted] (0,1) -- (2,1);
\draw[dotted] (1,0) -- (1,2);
\fill[black!80!white] (0,1) rectangle (1,2);
\fill[black!80!white] (0,0) rectangle (1,1);
\draw[very thick,->] (2.5,1) -- (3.5,1);
\draw[step=1cm,gray,very thin] (4,0) rectangle (6,2);
\draw[dotted] (4,1) -- (6,1);
\draw[dotted] (5,0) -- (5,2);
\fill[black!80!white] (5,0) rectangle (6,1);
\fill[black!80!white] (5,1) rectangle (6,2);
\end{tikzpicture}
}
\\
(5) & 
\resizebox{0.2\textwidth}{!}{
\begin{tikzpicture}
\draw[step=1cm,gray,very thin] (0,0) rectangle (2,2);
\draw[dotted] (0,1) -- (2,1);
\draw[dotted] (1,0) -- (1,2);
\fill[black!80!white] (1,1) rectangle (2,2);
\fill[black!80!white] (1,0) rectangle (2,1);
\fill[black!80!white] (0,0) rectangle (1,1);
\draw[very thick,->] (2.5,1) -- (3.5,1);
\draw[step=1cm,gray,very thin] (4,0) rectangle (6,2);
\draw[dotted] (4,1) -- (6,1);
\draw[dotted] (5,0) -- (5,2);
\fill[black!80!white] (4,0) rectangle (5,1);
\fill[black!80!white] (5,1) rectangle (6,2);
\fill[black!80!white] (5,0) rectangle (6,1);
\end{tikzpicture}
}
\\
(6) & 
\resizebox{0.2\textwidth}{!}{
\begin{tikzpicture}
\draw[step=1cm,gray,very thin] (0,0) rectangle (2,2);
\draw[dotted] (0,1) -- (2,1);
\draw[dotted] (1,0) -- (1,2);
\fill[black!80!white] (1,1) rectangle (2,2);
\fill[black!80!white] (1,0) rectangle (2,1);
\fill[black!80!white] (0,0) rectangle (1,1);
\fill[black!80!white] (0,1) rectangle (1,2);
\draw[very thick,->] (2.5,1) -- (3.5,1);
\draw[step=1cm,gray,very thin] (4,0) rectangle (6,2);
\draw[dotted] (4,1) -- (6,1);
\draw[dotted] (5,0) -- (5,2);
\fill[black!80!white] (4,0) rectangle (5,1);
\fill[black!80!white] (5,1) rectangle (6,2);
\fill[black!80!white] (5,0) rectangle (6,1);
\fill[black!80!white] (4,1) rectangle (5,2);
\end{tikzpicture}
}
\end{tabular}
}
\caption{\label{fig:schaeffer} Transition rules of 
Schaeffer's physically universal CA. Further rules are given by rotation invariance.}
\end{figure}
We now discuss a rather primitive solution of
implementing our bit switching task
in Schaeffer's CA. Its resource requirements 
grow at least linearly in $n$, which at first appears to be suboptimal. 
Yet, we will later show that linear growth is optimal.
We first observe that
the CA admits free particle propagation in diagonal direction, a
fact that is heavily used in the proof
for physical universality \cite{Schaeffer2015}. Figure~\ref{fig:particleprop} visualizes
this motion. 
\begin{figure}
\centerline{
\resizebox{0.8\textwidth}{!}{
\begin{tikzpicture}
\draw[step=2cm,black,thin] (0,0) grid (4,4);
\draw[red,thin] (1,0) -- (1,4);
\draw[red,thin] (3,0) -- (3,4);
\draw[red,thin] (0,1) -- (4,1);
\draw[red,thin] (0,3) -- (4,3);
\fill[black!80!white] (1,2) rectangle (2,3);
\draw[very thick,->] (5.5,2) -- (6.5,2);
\draw[step=2cm,black,thin] (8,0) grid (12,4);
\draw[red,thin] (9,0) -- (9,4);
\draw[red,thin] (11,0) -- (11,4);
\draw[red,thin] (8,1) -- (12,1);
\draw[red,thin] (8,3) -- (12,3);
\fill[black!80!white] (10,1) rectangle (11,2);
\draw[very thick,->] (13.5,2) -- (14.5,2);
\draw[step=2cm,black,thin] (16,0) grid (20,4);
\draw[red,thin] (17,0) -- (17,4);
\draw[red,thin] (19,0) -- (19,4);
\draw[red,thin] (16,1) -- (20,1);
\draw[red,thin] (16,3) -- (20,3);
\fill[black!80!white] (19,0) rectangle (20,1);
\end{tikzpicture}
}
}
\caption{\label{fig:particleprop} Free particle
propagation in Schaeffer's physically universal CA: the configuration on the left turns into the one in the middle by applying Rule (2) to the red partitioning. The middle configuration turns into the right one by applying the same rule to the black partitioning.}
\end{figure}
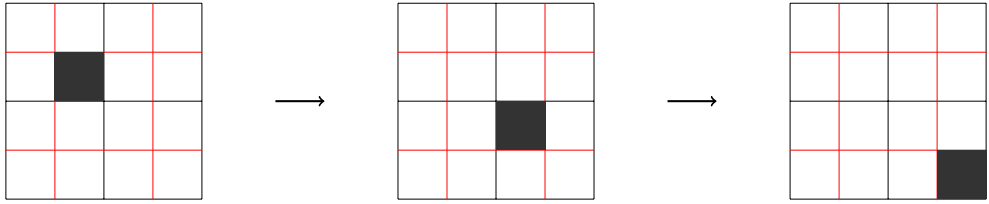
We now use a `beam of particles' in diagonal direction in which
a particle and a hole alternate, as shown in Figure~\ref{fig:beam}.
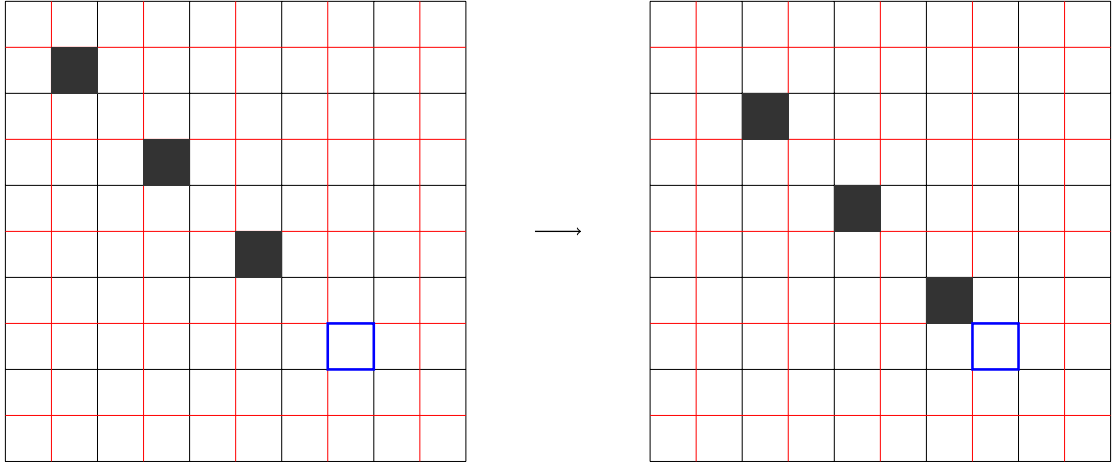
\begin{figure}
\centerline{
\resizebox{0.9\textwidth}{!}{
\begin{tikzpicture}
\draw[step=2cm,black,thin] (0,0) grid (10,10);
\draw[red,thin] (1,0) -- (1,10);
\draw[red,thin] (3,0) -- (3,10);
\draw[red,thin] (5,0) -- (5,10);
\draw[red,thin] (7,0) -- (7,10);
\draw[red,thin] (9,0) -- (9,10);
\draw[red,thin] (0,1) -- (10,1);
\draw[red,thin] (0,3) -- (10,3);
\draw[red,thin] (0,5) -- (10,5);
\draw[red,thin] (0,7) -- (10,7);
\draw[red,thin] (0,9) -- (10,9);
\fill[black!80!white] (1,8) rectangle (2,9);
\fill[black!80!white] (3,6) rectangle (4,7);
\fill[black!80!white] (5,4) rectangle (6,5);
\draw[blue,ultra thick] (7,2) rectangle (8,3);
\draw[thick,->] (11.5,5) -- (12.5,5);
\draw[step=2cm,black,thin] (14,0) grid (24,10);
\draw[red,thin] (15,0) -- (15,10);
\draw[red,thin] (17,0) -- (17,10);
\draw[red,thin] (19,0) -- (19,10);
\draw[red,thin] (21,0) -- (21,10);
\draw[red,thin] (23,0) -- (23,10);
\draw[red,thin] (14,1) -- (24,1);
\draw[red,thin] (14,3) -- (24,3);
\draw[red,thin] (14,5) -- (24,5);
\draw[red,thin] (14,7) -- (24,7);
\draw[red,thin] (14,9) -- (24,9);
\fill[black!80!white] (16,7) rectangle (17,8);
\fill[black!80!white] (18,5) rectangle (19,6);
\fill[black!80!white] (20,3) rectangle (21,4);
\draw[blue,ultra thick] (21,2) rectangle (22,3);
\end{tikzpicture}
}
}
\caption{\label{fig:beam} Beam of $3$ propagating particles which implement the turning on and off of the blue target cell $3$ times.}
\end{figure}
Then choose a target bit along the diagonal, as indicated by the
blue square in Figure~\ref{fig:beam}. Just by waiting, this bit is turned on and off when particles and holes appear, respectively.
The resource requirements of this implementation are large:
not only does it require to correctly locate particles and holes, it also requires to keep the space around the beam empty to protect the beam from collisions. 

\begin{Remark}[complexity aspect of preparation]
Apart from being costly from the thermodynamic point of view, 
the implementation is also `not nice' in other respects:
compared to the simplicity of our control problem,
the initialization is rather complex.
Assume, for comparison, the following general control task:
given some arbitrary binary string $b$ of length $2n$, 
the target bit is supposed to attain the value $b_j$ at ime $j$.
Then, the above beam solves this task for the special case where 
$b=101010\cdots 10$. The general task can be obviously solved by the same procedure as above: just locate particles and holes according to $b$. The fact that the solution of the simple special case
is based on the same principle suggests that it is a `bad' solution; it is inappropriately complex compared to the simplicity of the task. In a way, it reduces a simple control operation to
one that seems more complex. This raises the question of what
one  wants to call a `solution' of a control task.
\end{Remark}

To return to the thermodynamic question, one may wonder if there exist smarter implementations of the bit switch process where the resource requirements do not grow linearly in $n$.
We can easily show that the {\it range} of the implementation of
the $n$-fold bit switch grows linearly in $n$.
To this end, we
first need the Diffusion Theorem of \cite{Schaeffer2015}:

\begin{Theorem}[Diffusion Theorem]\label{thm:diffusion}
Let $S\subset \Z^2$ be an arbitrary square of side length $s$ in the Moore CA and $\phi$ an arbitrary configuration that is empty on $\bar{S}$.  Then $\alpha^t(\phi)$ is empty on $S$ for all $t\ge s$.
\end{Theorem}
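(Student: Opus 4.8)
The plan is to separate the statement into \emph{clearing} (the square $S$ becomes empty by time $s$) and \emph{persistence} ($S$ stays empty for all larger $t$), and to drive both halves by the billiard-ball structure of Schaeffer's CA. Two preliminary observations set the stage. First, every local rule in Figure~\ref{fig:schaeffer} preserves the number of occupied cells in its block, so the global particle number is conserved; thus ``$S$ is empty'' is precisely ``every particle has left $S$,'' and particles can neither be created inside $S$ nor annihilated. Second, the all-vacuum configuration is a fixed point (rule~(1)) and signals move at most one cell per step, so a standard light-cone argument confines everything to a growing cone and, more importantly, reduces persistence to showing that no particle is ever moving \emph{into} $S$ once it has left.

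The mechanism I would exploit is that free particles carry a fixed diagonal momentum: reading rule~(2) under rotation invariance (Figure~\ref{fig:particleprop}), a lone particle travels along one of the four diagonals at unit Chebyshev speed, with the senses SE/NW preserving the anti-diagonal coordinate $x+y$ and the senses NE/SW preserving the main-diagonal coordinate $x-y$. In the absence of collisions this alone finishes the clearing half: each particle streams monotonically along its diagonal and must exit the side-$s$ square within $s$ steps, which is the calibration behind the bound $t\ge s$. For persistence, the four families of exit streams move along \emph{diverging} directions away from $S$, so once outside they never meet again, never scatter, and---by the light-cone reduction---never re-enter; reversibility guarantees there is likewise no spontaneous inward motion.

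The genuine difficulty is that collisions do occur: rules~(3),(5),(6) hold clusters momentarily in place and rule~(4) transports a bound pair \emph{axially} rather than diagonally, so individual momenta are not conserved and a particle could in principle be turned back toward $S$. The heart of the proof is therefore to produce a global monovariant---robust to every collision rule---that certifies net outward dispersal; the natural candidate is the bounding box of the occupied set measured in the diagonal coordinates $x\pm y$, for which one shows that each outer face is pushed strictly outward (and never inward) over one Moore step. Here the two shifted Margolus partitions are essential: a configuration that looks static under the even partition, such as a full block under rule~(6), is split across four blocks of the odd partition, in each of which rule~(2) ejects a lone particle diagonally outward, so no bounded cluster can persist---this is exactly the property that fails for Margolus' impenetrable walls (Subsection~\ref{subsec:mbbm}) and that physical universality forbids. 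I expect this monovariant bookkeeping, together with the verification that the outward speed is exactly one cell per step so that precisely $s$ steps suffice, to be the main obstacle, and it is the part I would either carry out by a finite case analysis on the rules of Figure~\ref{fig:schaeffer} or simply import from~\cite{Schaeffer2015}.
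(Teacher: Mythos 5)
First, the point of comparison: the paper contains \emph{no proof} of Theorem~\ref{thm:diffusion} at all. It is introduced as ``the Diffusion Theorem of \cite{Schaeffer2015}'' and imported as a black box, so your closing fallback---``or simply import from~\cite{Schaeffer2015}''---is precisely what the paper does, and the only argument on record to compare against is Schaeffer's. Judged as a self-contained proof, your proposal has a fatal gap: the monovariant at its heart is false. Under Schaeffer's rules a lone particle at the NE corner of its current Margolus block moves to the SW corner (rule~(2)), and after the partition shifts it is again at the NE corner of its new block, so it moves SW forever. If that particle is the one realizing $\max(x+y)$ over the occupied set, the ``NE face'' of your diagonal bounding box moves strictly \emph{inward} at every step, indefinitely; a configuration consisting of a single inward-directed particle already refutes ``each outer face is pushed strictly outward (and never inward).'' This is not a reparable technicality: the hypothesis $t\ge s$ is needed exactly because particles with inward momentum may have to traverse the whole square before leaving (a SW-mover starting at the NE corner of $S$ exits only at time $s$), so no measure of the extent of the support can be monotone from the first step. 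Any correct proof must show instead that, collisions notwithstanding, the occupied set is dominated by what free diagonal flight would produce---which is the actual content of Schaeffer's argument, not something the rules yield by inspection of single faces.

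Two subsidiary claims also fail. (i) Exit streams \emph{do} meet: a NE-mover and a SE-mover in the same column at even separation become an axial pair in one block and interact via rule~(4); with three particles rule~(5) produces an outcome genuinely different from free passage. So ``once outside they never meet again, never scatter'' is unavailable, and excluding that post-exit collisions ever push a particle back into $S$ is real work, not a corollary of diverging directions. (ii) The splitting of an isolated full block under the shifted partition is \emph{not} ``exactly the property that fails for Margolus' impenetrable walls'': the same splitting occurs in Margolus' CA (compare Figure~\ref{fig:bbmca}), whose isolated $2\times 2$ blobs also disperse. What actually differs is rule~(4): axial pairs are frozen in Margolus' CA---and the walls of Subsection~\ref{subsec:mbbm}, Figure~\ref{fig:switchingbbmca}, are built from them---whereas in Schaeffer's CA they translate. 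If you want a self-contained proof in the spirit of techniques this paper does use, adapt the wildcard (``abstract CA'') machinery from the proof of Theorem~\ref{thm:main}: set every cell of $S$ to $\top$ and every cell of $\bar{S}$ to $0$; check that no forbidden pattern is present initially (full-$\top$ blocks and axial $\top$ pairs are allowed), hence none ever arises; then every $\top$ propagates along a fixed diagonal determined by its position parity, so all $\top$'s leave $S$ within $s$ Moore steps and, by monotonicity along their diagonals, never return; soundness of the abstraction transfers this to every concrete configuration supported in $S$.
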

We then have:
\begin{Theorem}[range of device restoring a region after $t$ time steps]\label{thm:storage}
Let $f : \Sigma^X \rightarrow \Sigma^X$ denote an arbitrary bijection for some region $X\subset \Z^2$. Assume that  
$(Z,\phi_Z,t)$ is a device for implementing $f$. Then its range is at least $t$.
\end{Theorem}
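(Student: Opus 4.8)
The plan is to argue by contradiction using the Diffusion Theorem (Theorem~\ref{thm:diffusion}), exploiting the fact that a device must succeed for \emph{every} filling $\phi'$ of the uninitialized cells --- in particular for the empty one. So suppose, toward a contradiction, that the range of $(Z,\phi_Z,t)$ were some $s<t$. By definition of range, the relevant part $W=Z\cup X$ then fits inside a square $S\subset\Z^2$ of side length $s$. Since $W\subseteq S$, taking complements gives $\bar S\subseteq\bar W=\bar Z\cap\bar X$, so every cell outside $S$ lies in the region on which $\phi'$ is free to be chosen.

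Next I would feed the device an empty surrounding. Fix an arbitrary input $x\in\Sigma^X$ and take $\phi'$ to be the all-zero configuration on $\bar Z\cap\bar X$; write $\gamma=\phi_Z\oplus\phi'\oplus x$. By construction $\gamma$ is empty on $\bar S$, because there it agrees with $\phi'$. Since $s<t$ gives $t\ge s$, the Diffusion Theorem applies and yields that $\alpha^t(\gamma)$ is empty on the whole square $S$, hence in particular on $X\subseteq W\subseteq S$.

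On the other hand, $(Z,\phi_Z,t)$ is a device for $f$, so for this very $\phi'$ the evolution satisfies $\gamma\stackrel{t}{\mapsto}\psi_x\oplus f(x)$; that is, the state read off on $X$ after $t$ steps is exactly $f(x)$. Combining this with the previous paragraph forces $f(x)$ to be the all-zero configuration on $X$, and this holds for every choice of $x$. Thus $f$ would be the constant zero map, contradicting the assumption that $f$ is a bijection (as soon as $\Sigma^X$ has more than one element). Hence no device of range $s<t$ can exist, and the range is at least $t$.

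The only place requiring care is the choice of $\phi'$: the definition of a device grants success for \emph{all} fillings of $\bar Z\cap\bar X$, and the argument crucially selects the empty filling in order to trigger diffusion inside $S$. I expect the role of bijectivity to be the one hypothesis one must not drop, and it is where the real content sits --- a constant map such as the erasure $f\equiv 0$ can of course be implemented with range far below $t$, so the contradiction genuinely stems from $f$ being injective (equivalently, onto) rather than from anything about physical universality.
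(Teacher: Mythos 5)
Your proof is correct and takes essentially the same route as the paper: assume the range $s<t$, apply the Diffusion Theorem to the configuration with empty surrounding, conclude that the state on $X$ after $t$ steps is empty for every input $x$, and derive a contradiction with the bijectivity of $f$. The paper's proof is simply a terser version of this argument, leaving implicit the details you spell out (the legitimacy of choosing the empty filling $\phi'$, and that a bijection cannot be the constant zero map).
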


\begin{proof}
Let $S$ be the smallest square containing $W=Z\cup X$ and $s$ its side length.  We must have $s\ge t$. Assume to the contrary that $s\le t-1$.  Then by the diffusion theorem any configuration $\phi$ that is empty on $\bar{S}$ evolves in $t$ times steps to a configuration that is empty on $S$.  This contradicts the assumption there there is a
configuration $\phi_Z$ that implements a bijection on site $X$. 
\end{proof}

\noindent
An important special case of the above theorem is when $f$ is the identity function ${\rm ID}$.
Moreover, we have:

\begin{Theorem}[range of device implementing $n$ powers of a transformation]\label{thm:repetitions}
Let $f: \Sigma^X \to \Sigma^X$ be an arbitrary bijection for some region $X \subset \Z^2$.
Assume that $(Z,\phi_Z,t_1,t_2,\dots,t_n)$ 
be a device for implementing of the sequence $f,f,\dots,f$.
Then its range is at least $n$.
\end{Theorem}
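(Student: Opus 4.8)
The plan is to collapse the $n$-power sequence device into a single-transformation device and then invoke Theorem~\ref{thm:storage}. First I would note that each $t_j$ is a positive integer, hence $t_j \ge 1$, so the total running time $T := \sum_{i=1}^n t_i$ satisfies $T \ge n$. Next I would observe that since $f$ is a bijection of $\Sigma^X$, its $n$-fold composition $f^n := f \circ f \circ \cdots \circ f$ is again a bijection $\Sigma^X \to \Sigma^X$.

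The core step is to specialize Definition~\ref{def:implSeq} to the case at hand, where $X_1 = \cdots = X_{n+1} = X$ and $f_1 = \cdots = f_n = f$. Its defining property says that $\phi_Z \oplus \phi'$ implements $f_n \circ \cdots \circ f_1 = f^n$ in $\sum_{i=1}^n t_i = T$ time steps, for all $\phi' \in \Sigma^{\bar{Z} \cap \bar{X}}$. Comparing with Definition~\ref{def:implement}, this is exactly the assertion that the triple $(Z,\phi_Z,T)$ is a device for implementing the single transformation $f^n : \Sigma^X \to \Sigma^X$. I would also check that the two notions of range agree: for the sequence device the relevant set is $W = Z \cup (\cup_{j=1}^{n+1} X_j) = Z \cup X$, while for the single-transformation device with $Y = X$ it is $W = Z \cup X \cup Y = Z \cup X$; these coincide, so the two devices have the same range.

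It then remains to apply Theorem~\ref{thm:storage} to the bijection $f^n$ and the device $(Z,\phi_Z,T)$, which yields that the range is at least $T$. Combining with $T \ge n$ gives a range of at least $n$, as claimed. I do not expect a genuine obstacle here; the only point requiring care is verifying that the sequence definition really reduces to the single-transformation definition when all the maps are equal, so that Theorem~\ref{thm:storage} (stated only for single-transformation devices) is applicable, together with the matching of the two range notions. In particular, the intermediate outputs after times $t_1, t_1+t_2, \dots$ impose extra structure that is simply discarded: the only feature used for the lower bound is the net effect after the full time $T$, which is a bijection on $X$, whereupon the Diffusion Theorem underlying Theorem~\ref{thm:storage} forces a containing square of side length at least $T \ge n$.
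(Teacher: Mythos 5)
Your proof is correct and takes essentially the same route as the paper: both arguments combine the Diffusion-Theorem lower bound (range at least the elapsed time) with the observation that the total waiting time $\sum_{i=1}^n t_i \ge n$ since each $t_i \ge 1$. The only difference is organizational—where the paper re-runs the diffusion argument at the final time (writing $t_n$ where the cumulative time $\sum_{i=1}^n t_i$ is evidently meant), you reduce the sequence device to a single-transformation device for the bijection $f^n$ and invoke Theorem~\ref{thm:storage} directly, which is slightly cleaner and makes the reduction of definitions explicit.
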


\begin{proof}
The proof is very similar to the proof of Theorem~\ref{thm:storage}.  If $W=Z\cup X$ were contained in a square
of side length $s \le t_n-1$ the configuration after $t_n$ would be empty on $X$. Thus $s\ge t_n$. The result 
follows since we must have $t_n\ge n$. 
\end{proof}

\begin{Remark}[resources requirements for 1D physically universal CA]
We make some comments on resource requirements of the one-dimensional physically universal CA in \cite{Salo2015}.
This CA uses interacting particles particles that propagate with different speeds, namely $\pm 1$ or $\pm 2$ sites per time step. Similar results to 
Theorem~\ref{thm:storage} and Theorem~\ref{thm:repetitions} hold for this CA as well.

\cite[Lemma 2]{Salo2015} is also a kind of diffusion theorem similar to Theorem~\ref{thm:diffusion}. 
We reformulate its statements slightly. Let $S$ be an interval of length $s$ and $\phi$ a configuration that is empty on $\bar{S}$. Then, after $t(s)\in O(s)$ time steps all configurations $\phi'$ that arise from $\phi$ under the autonomous time evolution are empty on $S$. It is convenient to rephrase $t(s)\in O(s)$ as follows: there exist two contants $s_0$ and $\kappa$ such that $t(s)\le \kappa s$ for all $s\ge s_0$.

Using the same arguments but now with the one-dimensional diffusion theorem, we may conclude that for the one-dimensional CA the ranges must be at least $t/\kappa$ and $n/\kappa$ in Theorem~\ref{thm:storage} and Theorem~\ref{thm:repetitions}, respectively, provided that $X$ is sufficiently large. The latter condition on $X$ is necessary because the diffusion theorem only applies for intervals of length at least $s_0$.

\end{Remark}

The range is a rather crude measure of the resource requirements. A finer measure is the size, that is, the number of cells of the relevant region.  We focus the elementary  control task of restoring a bit $n$ times and derive a lower bound on the size of the corresponding device.

\begin{Theorem}[size of device restoring a bit $n$ times]\label{thm:main}
Let ${\rm ID}$ denote the identity on some cell of $\Z^2$ in the Moore CA corresponding to Schaeffer's construction. Assume that $(Z,\phi_Z,t_1,t_2,\dots,t_n)$ is a device for implementing ${\rm ID},{\rm ID},\dots,{\rm ID}$. Then
$Z$ contains at least  $n/4-1$ cells (also counted in the Moore CA).
\end{Theorem}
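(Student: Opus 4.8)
The plan is to reduce to a single target cell $c$, which I place at the origin of $\Z^2$, and to fix an input that puts $c$ in a non-empty state (value $1$). By Definition~\ref{def:implSeq} the device must then reproduce this non-empty state at every instant $T_j:=\sum_{i=1}^{j}t_i$ for $j=1,\dots,n$, and since each $t_i\ge 1$ we have $T_j\ge j$, in particular $T_n\ge n$. Choosing $\phi'$ empty on $\bar Z\cap\bar X$, the only initially non-empty cells lie in the finite support $Z\cup\{c\}$, so the whole configuration is exactly the kind of isolated island in vacuum governed by the Diffusion Theorem (Theorem~\ref{thm:diffusion}). The goal is to charge each of the $n$ restoration events to an initially non-empty cell in such a way that no cell is charged too often, the factor $1/4$ coming from the four diagonal directions of Schaeffer's CA.

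First I would invoke the ballistic structure recalled around Figure~\ref{fig:particleprop}: every non-empty cell carries a particle moving along one of the four diagonals at speed one, and between collisions each particle follows a straight diagonal line. Hence at every time $T_j$ the cell $c$ is occupied by a particle whose direction is one of $\{(+1,+1),(+1,-1),(-1,+1),(-1,-1)\}$. I assign this direction to the $j$-th event; by the pigeonhole principle some single direction, say $(+1,+1)$, is attached to at least $\lceil n/4\rceil$ of the events. Let $J$ be this index set.

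The clean core of the argument is that the $|J|\ge n/4$ events in $J$ force $|J|$ distinct initially non-empty cells. All the selected particles occupy $c=(0,0)$ and move along $(+1,+1)$, so in the collision-free picture each sits at time $0$ on the main diagonal at the cell $(-T_j,-T_j)$; distinct instants $T_j$ then give distinct cells, all of them in $Z\cup\{c\}$. Because $Z\subset\bar X$ forces $c\notin Z$, this gives $|J|\le |Z\cup\{c\}|=|Z|+1$, hence $|Z|\ge |J|-1\ge n/4-1$, as claimed.

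The main obstacle is precisely the collision caveat hidden in the previous paragraph: a particle reaching $c$ at time $T_j$ need not have moved freely, so the local collision rules of Figure~\ref{fig:schaeffer} can both displace its time-$0$ position away from $(-T_j,-T_j)$ and, worse, allow one particle to pass through the origin in direction $(+1,+1)$ at several different instants after bouncing back and forth. Making the charging injective despite this is where I expect essentially all the work to lie. I would handle it by replacing the explicit trace-back with a conserved directed quantity of the ballistic dynamics — the signed flux of $(+1,+1)$-particles across the diagonal $p=q$ at the origin — which is altered only by genuine crossings and which the local update is designed to preserve; together with reversibility of the CA this injects the $\ge n/4$ crossing events into distinct cells of $Z\cup\{c\}$, since every back-and-forth bounce that reuses a particle must be paid for by a distinct collision partner originating from its own initial cell. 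The Diffusion Theorem then enters only as the global guarantee that all these particles really emanate from the finite initialized island rather than from the surrounding vacuum, so that the charging lands inside $Z\cup\{c\}$ and the bound $|Z|\ge n/4-1$ survives.
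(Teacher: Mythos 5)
Your skeleton is the right one---fix an input that makes the target bit $1$, observe that it must be $1$ at the $n$ instants $T_j \ge j$, and charge these events to initially non-empty cells---but the proof collapses exactly at the point you flag yourself: making the charging injective in the presence of collisions. The mechanism you propose does not exist in Schaeffer's CA. A ``signed flux of $(+1,+1)$-particles across the diagonal'' is not conserved by the local rules: in a head-on collision (rule (3) of Figure~\ref{fig:schaeffer}) the two particles stay put for one step, which, because the Margolus partition alternates, reverses both of their directions; in rule (4) the pair even moves horizontally for a step. So neither the diagonal direction you want to attach to each event at the origin nor the flux you want to conserve is well defined at collision times. Worse, your accounting principle ``every back-and-forth bounce must be paid for by a distinct collision partner originating from its own initial cell'' is precisely the statement that needs proof, and it is not innocent: the partner that reflects your particle is itself reflected, can be turned around by a third particle, and reused, so one must exclude bouncing ``cavities'' built from finitely many particles. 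That such structures are impossible is exactly what distinguishes Schaeffer's CA from Margolus' billiard-ball CA (where walls make your claim false, cf.\ Section~\ref{subsec:mbbm}), and it does not follow from the Diffusion Theorem used as a black box: applied to the bounding square of $Z \cup X$, that theorem only forces the target to be empty after $s$ steps, i.e., it bounds the \emph{range} (this is the content of Theorems~\ref{thm:storage} and~\ref{thm:repetitions}), never the \emph{number of cells}, which is what the present theorem is about; $Z$ could a priori be sparse inside a huge square. Your closing appeal to it (``particles emanate from the island'') is just locality and supplies none of the missing injectivity.

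The paper avoids the collision problem rather than solving it. Following the technique in Schaeffer's proof of the Diffusion Theorem, it passes to the abstract (``wildcard'') CA: assign $\top$ to every Margolus cell of $W = Z \cup X$ and $0$ to everything else. This initial abstract configuration contains no forbidden pattern, hence none ever occurs, so in the abstract dynamics every $\top$-particle propagates freely forever---collisions simply never arise in the over-approximation---and each of the $4|W|$ $\top$-particles (four Margolus cells per Moore cell) can make the target cell non-empty at most once. Since the concrete non-empty cells are always covered by the abstract $\top$ cells whatever $\phi_Z$ actually is, one gets $n \le 4|W|$ and hence $|Z| \ge n/4 - 1$; note that the factor $4$ there is the Moore-to-Margolus conversion, not your pigeonhole over directions. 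To complete your concrete-dynamics argument you would in effect have to re-derive this wildcard analysis, so the fix is to replace the flux heuristic by it.
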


\begin{proof}
Below, the term 'cell' refers to
a cell in the Margolus CA (containing just one bit), not the $2\times 2$ block defining the cell of the corresponding Moore CA.
Let $X$ denote the source/target $2\times 2$ block. 
Since $Z$ consists, by definition, of cells 
of the Moore CA, it consists only of complete 
$2\times 2$ blocks in the Margolus CA.

We now rely on the techniques developed
in the proof of Theorem~4 in \cite{Schaeffer2015}.
We also consider
an `abstract' CA that consists of three states $\{0,1,\top\}$, where
$\top$ denotes a `wild card' that stands for an uncertain state. The 
purpose of the abstract CA is merely to keep track of how 
uncertain states propagate in the concrete CA. Ref.~\cite{Schaeffer2015}
describes a rather simple set of update rules for the abstract CA, whose details are not needed.
The essential observation that we adopt is that $\top$ particles exhibit free particle propagation as
long as the following `forbidden' patterns

\vspace{0.2cm}
\centerline{
\resizebox{0.2\textwidth}{!}{
\begin{tikzpicture}
\draw[step=1cm,gray,very thin] (0,0) rectangle (2,2);
\draw[dotted] (0,1) -- (2,1);
\draw[dotted] (1,0) -- (1,2);
\fill[black!40!white] (0,1) rectangle (1,2);
\fill[black!40!white] (1,0) rectangle (2,1);
\draw[step=1cm,gray,very thin] (4,0) rectangle (6,2);
\draw[dotted] (4,1) -- (6,1);
\draw[dotted] (5,0) -- (5,2);
\fill[black!40!white] (4,1) rectangle (5,2);
\fill[black!40!white] (5,1) rectangle (6,2);
\fill[black!40!white] (5,0) rectangle (6,1);
\end{tikzpicture}
}
}

\vspace{0.2cm}
\noindent
and their rotated versions do  not occur. Here a grey box indicates the $\top$ state, which stands for either the $0$ state (white) or the $1$ state (black).
It is important that these forbidden patterns will never occur during the dynamical evolution of the abstract CA if the initial configuration does not contain any forbidden patterns \cite{Schaeffer2015}.

First, we assign $\top$ to all cells in 
$W:=Z+X$
representing the fact that
their states are unknown or could be arbitrary (because we do not know what the correct $\phi_Z$ looks like and the source block $X$ could also be in any state). Second, we assign 
$0$ to all cells in the complement of 
$W$. 
This is possible because cells outside 
$W$ 
do not matter for the correct implementation.

This way, the forbidden patterns do not occur in the initial configuration and the dynamics of the abstract CA can be described
by free propagation of $\top$-particles: each $\top$-particle moves to the diagonally opposite cell, that is, in either NE, NW, SE, SW direction.
Consequently, any cell can attain $\top$ and, in particular $1$, at most 
$4|W|$ 
times.   

Assume one of the cells in the source region $X$ is in the state $1$ at $t=0$. Consequently, it must be in the state $1$ at least $n$ times during the interval $1,2,\ldots,t_n$ to ensure the correct implementation of the $n$-fold repetition of ${\rm ID}$. By  combining these two arguments together we conclude that
$W$ consists of at least $n$ cells of the Margolus CA. Hence, it consists of at least
$n/4$ cells of the Moore CA. Since
$Z$ differs from $W$ by only one cell
we finally obtain the lower bound $|Z| \geq n/4 -1$.
\end{proof}

\noindent
Theorem~\ref{thm:main} can easily be applied to our task of
$n$-fold NOT since the latter amounts to implementing the identity for
all $t_j$ with even $j$. 
 
It is unclear whether some of these insights apply to a general physically universal CA.
The question whether there exist physically universal CAs that do not satisfy the Diffusion Theorem has already been raised by Schaeffer \cite{Schaeffer2015}, which seems related to our thermodynamic questions since diffusion is what makes information so 
extremely unstable.

It is, however, clear that
in any physically universal CA a configuration of a finite region is unstable in the following sense:
\begin{Theorem}[instability of patterns]\label{thm:instability}
For some  physically universal CA, let $Z\subset \Z^d$ be a finite region that is initialized to the state
$\phi_Z$. Assume that the states of all cells of $\bar{Z}$ are unknow and described by some probability distribution $P$ that assigns 
a non-zero probability to every possible state in $\Sigma^{\bar{Z}}$. Then, for any configuration
$\phi'_Z$ of $Z$ there is a time $t$ such that $\phi_Z$ evolves to $\phi_Z'$ with non-zero probability.
\end{Theorem}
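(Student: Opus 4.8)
The plan is to reduce the claim directly to physical universality by choosing, for the fixed target pattern $\phi'_Z$, a transformation on the region $Z$ whose implementation forces the desired final state. Concretely, I would set $X = Y = Z$ and take $f : \Sigma^Z \to \Sigma^Z$ to be any function with $f(\phi_Z) = \phi'_Z$; the constant map $x \mapsto \phi'_Z$ is the simplest such choice. The point is that physical universality guarantees a single configuration of the complement that realizes $f$, while the full-support hypothesis on $P$ guarantees that exactly this configuration occurs with positive probability.

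First I would invoke physical universality for the pair $X=Y=Z$ and the function $f$ above: there exist a configuration $\phi^\star \in \Sigma^{\bar{Z}}$ and a waiting time $t \in \N$ such that $\phi^\star$ implements $f$ in time $t$. Unfolding the definition of implementation with input $x = \phi_Z$, this means
\[
\phi^\star \oplus \phi_Z \stackrel{t}{\mapsto} \psi \oplus f(\phi_Z) = \psi \oplus \phi'_Z
\]
for some $\psi \in \Sigma^{\bar{Z}}$. Thus, whenever the complement happens to be initialized to $\phi^\star$, the autonomous dynamics carries the state of $Z$ from $\phi_Z$ to $\phi'_Z$ after $t$ steps, irrespective of what the rest of the configuration contributes (it is absorbed into the unspecified remainder $\psi$).

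Second I would connect this to the random environment. Since $P$ assigns non-zero probability to every element of $\Sigma^{\bar{Z}}$, in particular $P(\phi^\star) > 0$. The event that the complement is exactly $\phi^\star$ therefore has positive probability, and conditioned on it the region $Z$ is in state $\phi'_Z$ at time $t$ with certainty. Hence the unconditional probability that $\phi_Z$ evolves to $\phi'_Z$ in $t$ steps is at least $P(\phi^\star) > 0$, which is precisely the assertion.

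I do not anticipate a genuine technical obstacle here; the whole content is a careful matching of the definitions. The only points that deserve attention are (i) making sure the source and target regions in the definition of implementation coincide, so that $\phi^\star$ is a configuration of $\bar{Z}$ and the output lives on $Z$ — this is why I take $X=Y=Z$; and (ii) observing that we need only \emph{one} implementing environment to carry positive weight, so the full-support assumption on $P$ is exactly the right (and only) probabilistic input required. No appeal to the Diffusion Theorem or to any structural feature of a specific CA is needed: the argument uses physical universality as a black box and therefore holds for every physically universal CA.
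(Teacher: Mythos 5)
Your first half coincides exactly with the paper's proof: choose $f:\Sigma^Z\to\Sigma^Z$ with $f(\phi_Z)=\phi'_Z$ and invoke physical universality (with $X=Y=Z$) to obtain a configuration $\phi^\star\in\Sigma^{\bar{Z}}$ and a time $t$ implementing $f$. The gap is in the probabilistic step. You conclude by asserting $P(\phi^\star)>0$, i.e.\ that the \emph{single infinite configuration} $\phi^\star$ carries positive weight. But $\bar{Z}$ is the complement of a finite region in $\Z^d$, hence infinite, so $\Sigma^{\bar{Z}}$ is uncountable (for $|\Sigma|\ge 2$), and no probability measure can assign non-zero mass to every point of an uncountable set: at most countably many points can be atoms. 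Read literally, the hypothesis of the theorem is therefore unsatisfiable and the statement vacuous; the only sensible reading (and the one the paper's argument is built for) is that $P$ has \emph{full support}, meaning every event determined by finitely many cells --- every cylinder set --- has positive probability. Under that reading a single configuration such as $\phi^\star$ typically has probability zero (think of the i.i.d.\ uniform measure on the cells), and your step fails precisely at the point you flag as unproblematic, namely the claim that one implementing environment carrying positive weight ``is exactly the right (and only) probabilistic input required.''

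The missing idea is locality of the CA dynamics, which the paper invokes explicitly: since signals propagate only a bounded number of cells per time step, the state of $Z$ at time $t$ depends only on the initial states of the cells within distance $t$ of $Z$. Hence \emph{every} configuration of $\bar{Z}$ that agrees with $\phi^\star$ on this finite margin also drives $\phi_Z$ to $\phi'_Z$ in $t$ steps. The relevant event is this cylinder set, not the singleton $\{\phi^\star\}$, and it is the cylinder set that has non-zero probability under the full-support hypothesis. So your closing remark that no structural feature of the CA beyond universality is needed is exactly where the proof breaks: finite propagation speed is the structural feature that converts a single implementing environment into a positive-probability set of implementing environments, and without it the conclusion does not follow.
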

\begin{proof}
Choose a function $f:\Sigma^Z\to \Sigma^Z$ with $f(\phi_Z)=\phi_Z'$.
By physical universality, there is a configuration of the complement of $Z$
implementing $f$ for some $t$. Since only the restriction of the configuration to a finite region 
matters (cells that are further away than $t$ sites do not matter) the set of all configurations
implementing $f$ has a non-zero probability.
\end{proof}

\noindent
The absence of impenetrable walls in physically universal automata is only the most
intuitive consequence of this obervation. Less obvious consequences remain to be discovered in the future.

\section{Conclusions}

Common discussions on thermodynamic irreversibility of operations often focus on
entropy generation while they substantially differ with respect to the underlying notion of entropy (e.g. Boltzmann entropy, Shannon respective von Neumann entropy, or Kolmogorov complexity 
\cite{Bennett:89,ZurekKol,AICarrowoftime}). 
Given these different notion of entropy, entropy generation is explained 
by coarse graining \cite{Balian1}, because complexity also contributes to physical entropy by definition \cite{Bennett:89,ZurekKol}, or because entropy leaks into the system from its environment. 

Irreversibility in physically universal CAs or Hamiltonian
systems is not due to entropy production -- at least not in any obvious sense. Instead, every evolution is to some extent irreversible simply because one has no access to the evolution, the autonomous time evolution of the system just 
continues forever. Therefore, simulating the inverse evolution on some target system 
involves sophisticated initialization of a large number of
cells in the surrounding (acting as the controller).
Since this initialization is typically destroyed by the autonomous evolution of the system,  restoring the state of the joint system of target {\it and} its controller involves
a sophisticated initialization of a `meta-controller', which, in turn,
will then be destroyed by the evolution. The question of how to reverse the dynamics of one system without disturbing the state of its surrounding thus raises the same question for an even larger system.

The idea that control operations, even when they are unitary,
imply heat generation in the controlling device, 
is certainly not new.  However,
physically universal CAs and Hamiltonians
may allow us to look at the idea from a new perspective
because they 
admit to describe target, controller, meta-controller and so on,
in a unified way since all of them are just regions of cells.
Moreover, physically universal CAs formalize the conflict between
controllability and isolability of a system in a  principled
way. This is because  physical universality,
which formalizes the ability to control subsystems,
implies instability of information, although quantitative results have to be left to the future. 
Here we have shown that in the existing constructions of physically universal cellular automata information is extremely unstable -- for instance, in the sense that the resource required for protecting information grows linearly in time.

The intention of this article is to inspire other researchers to explore implications of physical universality rather than exploring properties of specific constructions of CAs. Here we have discussed properties of Schaeffer's construction only to illustrate how to work with
our notion of resource requirements in the context of a physically universal CA.

\paragraph{Acknowledgements:} We would like to thank Scott Aaronson and Luke Schaeffer for 
interesting discussions on related questions and Armen Allahverdyan for comments on an earlier version of this manuscript.


\end{document}